\newtheorem*{definition*}{Definition}
\newtheorem{theorem}{Theorem}
\newtheorem*{theorem*}{Theorem}
\newtheorem{lemma}{Lemma}
\newtheorem{proposition}[theorem]{Proposition}
\newtheorem*{proposition*}{Proposition}
\newtheorem*{corollary*}{Corollary}
\theoremstyle{definition}
\newenvironment{mydef}[1]
  {\innercustomdef}
  {\endinnercustomdef}
\theoremstyle{plain}  
\newenvironment{mythm}[1]
  {\innercustomthm}
  {\endinnercustomthm}
\newenvironment{mycly}[1]
  {\innercustomcly}
  {\endinnercustomcly}
\newenvironment{myprop}[1]
  {\innercustomprop}
  {\endinnercustomprop}
\definecolor{purple}{HTML}{E5E3F5}
\definecolor{amaranth}{HTML}{F5CFD8}
\definecolor{yellow}{HTML}{FFEAB6}
\definecolor{pink}{HTML}{FFDBDA}
\definecolor{purple}{HTML}{E5E3F5}
\definecolor{amaranth}{HTML}{F5CFD8}
\definecolor{yellow}{HTML}{FFEAB6}
\definecolor{pink}{HTML}{FFDBDA}
\definecolor{darkpurple}{HTML}{3D348B}
\definecolor{darkamaranth}{HTML}{AB2346}
\definecolor{darkyellow}{HTML}{FFBA08}
\definecolor{tensor}{rgb}{0.5,0.8,0.5}
\definecolor{isometry}{rgb}{0.8,0.8,1}
\definecolor{unitary}{rgb}{0.8,0.5,.5}
\definecolor{gate}{rgb}{1.0,1.0,1.0}
\newcommand{\simplematrix}[3]{
	\begin{scope}[shift={(#1)}]
		\draw (-1,0) -- (1,0);
		\filldraw[fill=#3] (-1/2,-1/2) -- (-1/2,1/2) -- (1/2,1/2) -- (1/2,-1/2) -- (-1/2,-1/2);
		\draw (0,0) node {\scriptsize #2};
	\end{scope}
}
\newcommand{\simplematrixvertical}[3]{
	\begin{scope}[shift={(#1)}]
		\draw (0,-1) -- (0,1);
		\filldraw[fill=#3] (-1/2,-1/2) -- (-1/2,1/2) -- (1/2,1/2) -- (1/2,-1/2) -- (-1/2,-1/2);
		\draw (0,0) node {\scriptsize #2};
	\end{scope}
}
\newcommand{\rectangularsimplematrix}[4]{
	\begin{scope}[shift={(#1)}]
		\draw (-1-#4,0) -- (1+#4,0);
		\filldraw[fill=#3] (-1/2-#4,-1/2) -- (-1/2-#4,1/2) -- (1/2+#4,1/2) -- (1/2+#4,-1/2) -- (-1/2-#4,-1/2);
		\draw (0,0) node {\scriptsize #2};
	\end{scope}
}
\newcommand{\MPSTensor}[3]{
	\begin{scope}[shift={(#1)}]
		\draw (-1,0) -- (1,0);
		\draw (0,1) -- (0,0);
		\filldraw[fill=#3] (-1/2,-1/2) -- (-1/2,1/2) -- (1/2,1/2) -- (1/2,-1/2) -- (-1/2,-1/2);
		\draw (0,0) node {\scriptsize #2};
	\end{scope}
}
\newcommand{\MPSTensorLeft}[3]{
	\begin{scope}[shift={(#1)}]
		\draw (0,0) -- (1,0);
		\draw (0,1) -- (0,0);
		\filldraw[fill=#3] (-1/2,-1/2) -- (-1/2,1/2) -- (1/2,1/2) -- (1/2,-1/2) -- (-1/2,-1/2);
		\draw (0,0) node {\scriptsize #2};
	\end{scope}
}
\newcommand{\MPSTensorRight}[3]{
	\begin{scope}[shift={(#1)}]
		\draw (-1,0) -- (0,0);
		\draw (0,1) -- (0,0);
		\filldraw[fill=#3] (-1/2,-1/2) -- (-1/2,1/2) -- (1/2,1/2) -- (1/2,-1/2) -- (-1/2,-1/2);
		\draw (0,0) node {\scriptsize #2};
	\end{scope}
}
\newcommand{\MPSTensorrect}[4]{
	\begin{scope}[shift={(#1)}]
		\draw (-1,0) -- (1,0);
		\draw (0,1*#4) -- (0,0);
		\filldraw[fill=#3] (-1/2,-1/2*#4) -- (-1/2,1/2*#4) -- (1/2,1/2*#4) -- (1/2,-1/2*#4) -- (-1/2,-1/2*#4);
		\draw (0,0) node {\scriptsize #2};
	\end{scope}
}
\newcommand{\MPSTensorrecthoriz}[4]{
	\begin{scope}[shift={(#1)}]
		\draw (-1,0) -- (1,0);
		\draw (0,1) -- (0,0.5);
		\filldraw[fill=#3] (-1/2*#4,-1/2) -- (-1/2*#4,1/2) -- (1/2*#4,1/2) -- (1/2*#4,-1/2) -- (-1/2*#4,-1/2);
		\draw (0,0) node {\scriptsize #2};
	\end{scope}
}
\newcommand{\MPSTensordown}[3]{
	\begin{scope}[shift={(#1)}]
		\draw (-1,0) -- (1,0);
		\draw (0,-1) -- (0,0);
		\filldraw[fill=#3] (-1/2,-1/2) -- (-1/2,1/2) -- (1/2,1/2) -- (1/2,-1/2) -- (-1/2,-1/2);
		\draw (0,0) node {\scriptsize #2};
    \end{scope}
}
\newcommand{\MPSTensordownrecthoriz}[4]{
	\begin{scope}[shift={(#1)}]
		\draw (-1,0) -- (1,0);
		\draw (0,-1) -- (0,0);
		\filldraw[fill=#3] (-1/2*#4,-1/2) -- (-1/2*#4,1/2) -- (1/2*#4,1/2) -- (1/2*#4,-1/2) -- (-1/2*#4,-1/2);
		\draw (0,0) node {\scriptsize #2};
	\end{scope}
}
\newcommand{\InverseMPS}[3]{
	\begin{scope}[shift={(#1)}]
		\draw (-0.3,0) -- (-0.3,1);
        \draw (0.3,0) -- (0.3,1);
		\draw (0,-1) -- (0,0);
		\filldraw[fill=#3] (-1/2,-1/2) -- (-1/2,1/2) -- (1/2,1/2) -- (1/2,-1/2) -- (-1/2,-1/2);
		\draw (0,0) node {\scriptsize #2};
	\end{scope}
}
\newcommand{\ETensor}[5]{
	\begin{scope}[shift={(#1)}]
		\draw (-1-#4 +0.5,0) -- (1+ #4 -0.5,0);
        \draw (-1-#4 +0.5,#5) -- (1+#4 -0.5,#5);
		\filldraw[fill=#3] (-#4,-0.2) -- (-#4,#5 +0.2) -- (#4,#5 +0.2) -- (#4,-0.2) -- (-#4,-0.2);
		\draw (0,#5 /2) node {\scriptsize #2};
	\end{scope}
}
\newcommand{\RightVector}[5]{
	\begin{scope}[shift={(#1)}]
		\draw (-1-#4 +0.5,0) -- (0,0);
        \draw (-1-#4 +0.5,#5) -- (0,#5);
		\filldraw[fill=#3] (-#4,-0.2) -- (-#4,#5 +0.2) -- (#4,#5 +0.2) -- (#4,-0.2) -- (-#4,-0.2);
		\draw (0,#5 /2) node {\scriptsize #2};
	\end{scope}
}
\newcommand{\PhysicalOperator}[3]{
    \begin{scope}[shift={(#1)}]
		\draw (0,-1) -- (0,1);
		\filldraw[fill=#3] (-1/2,-1/2) -- (-1/2,1/2) -- (1/2,1/2) -- (1/2,-1/2) -- (-1/2,-1/2);
		\draw (0,0) node {\scriptsize #2};
	\end{scope}
}
\newcommand{\FullMPS}[3]{
	\begin{scope}[shift={(#1)}]
		\draw[shift={(0,0)},dotted] (0,0) -- (4.5,0);
        \MPSTensor{0,0}{#2}{#3}
        \MPSTensor{1.5,0}{#2}{#3}
        \MPSTensor{4.5,0}{#2}{#3}
        \draw (-1,0) -- (-1,-0.8) -- (5.5,-0.8) -- (5.5,0);
	\end{scope}
}
\newcommand{\FullMPSX}[5]{
	\begin{scope}[shift={(#1)}]
		\draw[shift={(0,0)},dotted] (0,0) -- (4.5,0);
        \MPSTensor{0,0}{#2}{#4}
        \MPSTensor{1.5,0}{#2}{#4}
        \MPSTensor{4.5,0}{#2}{#4}
        \simplematrix{-1.5,0}{#3}{#5}
        \draw (-2.5,0) -- (-2.5,-0.8) -- (5.5,-0.8) -- (5.5,0);
	\end{scope}
}
\newcommand*{\balancecolsandclearpage}{%
  \close@column@grid
  \cleardoublepage
  \twocolumngrid
}
\begin{document}

\title{The product structure of MPS-under-permutations}

\author{Marta Florido-Llin\`as}
\affiliation{
Max-Planck-Institut f{\"{u}}r Quantenoptik, Hans-Kopfermann-Str. 1, 85748 Garching, Germany
}
\affiliation{
Munich Center for Quantum Science and Technology (MCQST), Schellingstr. 4, 80799 M{\"{u}}nchen, Germany
}

\author{\'Alvaro M. Alhambra}
\affiliation{
Instituto de F\'isica Te\'orica UAM/CSIC, C/ Nicol\'as Cabrera 13-15, Cantoblanco, 28049 Madrid, Spain}

\author{Rahul Trivedi}
\affiliation{
Max-Planck-Institut f{\"{u}}r Quantenoptik, Hans-Kopfermann-Str. 1, 85748 Garching, Germany
}
\affiliation{
Munich Center for Quantum Science and Technology (MCQST), Schellingstr. 4, 80799 M{\"{u}}nchen, Germany
}

\author{Norbert Schuch}
\affiliation{University of Vienna, Faculty of Mathematics, Oskar-Morgenstern-Platz 1, 1090 Wien, Austria}
\affiliation{University of Vienna, Faculty of Physics, Boltzmanngasse 5, 1090 Wien, Austria}

\author{David P\'erez-Garc\'ia}
\affiliation{Departamento de An\'alisis Matem\'atico, Universidad Complutense de Madrid, 28040 Madrid, Spain}

\author{J. Ignacio Cirac}
\affiliation{
Max-Planck-Institut f{\"{u}}r Quantenoptik, Hans-Kopfermann-Str. 1, 85748 Garching, Germany
}
\affiliation{
Munich Center for Quantum Science and Technology (MCQST), Schellingstr. 4, 80799 M{\"{u}}nchen, Germany
}%

\date{\today}

\begin{abstract}
Tensor network methods have proved to be highly effective in addressing a wide variety of physical scenarios, including those lacking an intrinsic one-dimensional geometry. In such contexts, it is possible for the problem to exhibit a weak form of permutational symmetry, in the sense that entanglement behaves similarly across any arbitrary bipartition. In this paper, we show that translationally-invariant (TI) matrix product states (MPS) with this property are trivial, meaning that they are either product states or superpositions of a few of them. The results also apply to non-TI generic MPS, as well as further relevant examples of MPS including the W state and the Dicke states in an approximate sense. Our findings motivate the usage of ans\"atze simpler than tensor networks in systems whose structure is invariant under permutations. 
\end{abstract}

\maketitle

\section{Introduction} \label{sec:introduction}

\subsection{Motivation}

Tensor network (TN) methods, dating back to the inception of the DMRG algorithm \cite{DMRG92}, are one of our best tools to study quantum systems of many particles. The key property that makes them so useful is the fact that the correlation structure of many quantum states and operators is well captured by the geometry of the network. This means that these methods often yield accurate answers to problems involving low energy physics \cite{Verstraete2006,Schollwock_2011}, short-time dynamics \cite{Vidal_2004,Osborne_2006} or thermal equilibrium \cite{Verstraete_2004,White_2009,Kuwahara_2020}, among others. 

This is particularly the case for one-dimensional (1D) systems, for which the matrix product state (MPS) ansatz has been extensively studied \cite{fannes_finitely_1992, Perez-Garcia2007}, as well as for 2D systems \cite{verstraete_2004_PEPS}, and other geometries such as trees \cite{Murg2015}. Even further, the scope of tensor networks these days goes beyond physics, including other fields like machine learning \cite{stoudenmire2017supervised,novikov2017exponential,glasser2019expressive}, theoretical computer science \cite{Biamonte_2015, Kourtis_2019, Liu_2023}, and numerical methods for solving PDEs \cite{Khoromskij_2012, GarciaRipoll_2021, GarciaMolina_2023, Richter_PDEs_2021}.

In order to establish conditions under which MPS accurately approximate 1D chains with an efficient scaling of the bond dimension $D$, the scaling of the entanglement entropy of contiguous blocks can be used. For instance, states whose block Rényi entropies $S_\alpha$ for $\alpha < 1$ are upper bounded by a constant admit efficient MPS representations that describe them exactly if $\alpha = 0$, or approximately if $0 < \alpha < 1$ \cite{schuch2008entropy}. In such situations, expectation values can be computed with a time scaling as $O(D^3)$. 

This idea illustrates how certain physical constraints on states and models determine the most efficient ans\"atze to represent them. Other examples are states exhibiting permutational invariance, which are well approximated by convex combinations of identical product states (as shown by the so-called quantum de Finetti theorems \cite{caves_deFinetti_2002, renner_symmetry_2007, christandl2007definetti, brandao2013definetti}), or ground states of highly connected Hamiltonians, for which mean-field ansätze consisting of product states suffice \cite{bansal2007classical, gharibian2012approximation, brandao_product-state_2016}.

When using the MPS framework to study a specific system, a first necessary step is to choose a suitable one-dimensional ordering of the degrees of freedom. Quantum many-body systems, in particular quantum spin chains, naturally possess such an ordering. Yet, the MPS architecture (also known as tensor train \cite{oseledets_2011_tensor-trains}) is being used in contexts such as machine learning and numerical modelling, which lack an intrinsic 1D geometry. In those cases, the challenge arises: How can we determine the most suitable ordering of the variables for an MPS description out of all possible permutations~\cite{vinyals2015order, acharya2022qubit, Li_permutation_2022}? Similarly, applying MPS to problems in quantum chemistry \cite{Baiardi_chemistry_DMRG_2020, Chan_highly_2002, Chan_chemistry_DMRG_2011}, where the underlying interaction graph is usually highly connected, first requires finding a suitable 1D arrangement of the orbitals. While this ``ordering problem'' can be addressed through educated guesses \cite{legeza_optimizing_2003, rissler_measuring_2006, barcza2011} and specific algorithms \cite{krumnow2016, Moritz_2004, acharya2022qubit, Li_permutation_2022}, it can happen that they yield equally good accuracies regardless of the arrangement of the particles. This is the type of scenario that we study in this work. 

\subsection{Results}

In this work, we consider quantum many-body states, or more generally vectors in a tensor product space $\ket\psi\in(\mathbb C^d)^{\otimes N}$, with the property that they are well described by Matrix Product States (i.e., tensor trains), regardless of the ordering chosen. Here, ``well described'' can refer to either an exact MPS description or an approximate description with a given bond dimension $D$. Equivalently, this means that for any partition of the system into two parts, the Schmidt rank (i.e., the number of non-zero singular values, or the $0$-R\'enyi entropy) is either bounded by a constant $D$, or the state is well approximated by a state with such a bounded rank. We term this class of states (exact or approximate) \emph{MPS-under-permutations}.\footnote{Note that these states need by no means be permutationally invariant themselves; rather, they only need to possess a small amount of entanglement under any permutation of their sites.}

We then prove that, under some broadly applicable conditions, the class of exact (or approximate) MPS-under-permutations can, in fact, be written exactly (or approximately) as a sum over a small number of product states, that is, 
\begin{equation}
\label{eq:intro-central-result}
    \ket\psi \approx \sum_{i=1}^b \ket{\phi_i^1}\otimes \ket{\phi_i^2}
            \otimes\ket{\phi_i^3}\otimes\dots\ ,
\end{equation}
where the equality holds exactly (approximately) for exact (approximate) MPS-under-permutations. Expressed in quantum information terminology, $\ket\psi$ has a GHZ-type or cat-state-like form. This implies that MPS-under-permutations form a very special and restrictive subclass of MPS which does, in fact, have a much simpler representation than a general MPS, and which is significantly more efficient both in terms of the number of parameters as well as for computational purposes.

More specifically, what we show is the following: If $\ket\psi$ is an exact (approximate) MPS-under-permutations which is translationally invariant in some ordering, and its MPS description in the corresponding ordering has $b$ blocks in its canonical form,\footnote{Namely, the conventional canonical form for MPS
(cf.~Ref.~\cite{Cirac_review_2021}) which we will introduce later.} then
$\ket\psi$ is exactly (approximately) of the form
\eqref{eq:intro-central-result}, where each term in the sum is 
a tensor product $\ket{\phi_i}^{\otimes N}$ of identical states $\ket{\phi_i}$ on all $N$ components of $(\mathbb C^d)^{\otimes N}$.
If blocking is required in order to obtain the canonical form, the tensor product structure only shows up on the level of the blocks. In particular, if for the translationally invariant MPS representation of $\ket\psi$, $b=1$---which is satisfied by a generic MPS, and is equivalent to the absence of long-range correlations in the state\footnote{On a technical level, this means that the MPS is \emph{injective}~\cite{Cirac_review_2021}, see later}---then $\ket\psi$ is exactly (approximately) of tensor product form. 

Finally, if we drop the condition of translational invariance and consider exact or approximate MPS-under-permutations which have an injective MPS representation
(i.e.\ with $b = 1$), we show that $\ket\psi$ is still exactly or approximately
of tensor product form, but the tensor product no longer factorizes over all $N$
components in $(\mathbb C^d)^{\otimes N}$ since it can contain some small
clusters whose size depends on the amount of correlations.

\subsection{Discussion}

How surprising is this result, and do we really need to impose the aforementioned assumptions on the states? To understand these questions better, let us consider the special case of permutationally invariant states $\ket\psi\in(\mathbb C^d)^{\otimes N}$. It is well-known that permutationally invariant states can be expressed as superpositions of product states as $\ket{\psi} = \sum_{i=0}^N c_i \ket{\phi_i}^{\otimes N}$ ~\cite{Watrous_2018}. If we now have that $\ket\psi$ is an exact MPS-under-permutations, that is, it has a low Schmidt rank across every bipartition, it seems natural that we should be able to restrict the number of terms in the sum, possibly breaking the permutational symmetry, i.e., to write  $\ket{\psi} = \sum_{i=1}^r c_{i} \bigotimes_j \ket{\phi_{ij}}$, where $r$ depends on the Schmidt rank rather than $N$.

Yet, this is not the case. An illustrative example is the W state \cite{dur2000three}, $\ket{W_N} = \ket{10\dots 0} + \ket{01\dots 0} + \dots + \ket{00 \dots 1}$, a paradigmatic state in the context of quantum information. It is permutationally invariant and admits an MPS representation of bond dimension 2, that is, it has the exact MPS-under-permutation property. Still, it cannot be expressed as a sum of fewer than $N$ product states \cite{landsberg2011tensors}. At the same time, however, it can be approximated arbitrarily well by a sum of just two product states---specifically, $\ket{W_N} = \lim_{\varepsilon \to 0} \frac{1}{2\varepsilon} ([\ket{0} + \varepsilon \ket{1}]^{\otimes N} - [\ket{0} - \varepsilon \ket{1}]^{\otimes N})$, as the leading order cancels. These two notions of the number of terms needed in the sum are known as \emph{tensor rank} (for the exact description) and \emph{border rank} (for a description in the limit of $\varepsilon\to 0$), and are two key quantities in the field of algebraic complexity theory. For the W state, these values are known to be $N$ and $2$, respectively \cite{landsberg2011tensors}. However, finding them for a general state has been proven to be  NP-hard \cite{hillar_2013_tensor-NP-hard, landsberg2011tensors}: This illustrates that we cannot expect a solution to the question as to what the rank $b$ in \eqref{eq:intro-central-result} is for exact and approximate MPS-under-permutations in full generality, and thus explains the necessity of additional conditions on the families of states, such as those featuring in our results.

Our findings can be potentially understood as an approximate version of the quantum de Finetti theorem \cite{caves_deFinetti_2002, renner_symmetry_2007, christandl2007definetti, brandao2013definetti} in the context of MPS. As an application, we show that whenever TN methods yield accurate solutions irrespective of the particle ordering in certain contexts, such as in finding the unique ground state of gapped Hamiltonians, it is because the solution contains little entanglement in the first place. This justifies the use of a much simpler ansatz consisting of product states or superpositions of a few of them, for some problems where we expect entanglement to behave similarly regardless of the ordering. Equivalently, if MPS methods outperform mean-field approaches in particular scenarios without an in-built 1D geometry, there must exist at least a particle arrangement that is significantly better than others. 

One might reasonably ask: aren’t MPS already efficient enough? Even though most MPS operations scale polynomially in the bond dimension $D$ (e.g. computing expectation values costs $O(ND^3)$ \cite{orus_2014_introTNs}), the repeated sweeps of DMRG or time-evolution algorithms can still prove very costly in practice, even at moderate $D$ \cite{Schollwock_2011}. Therefore, when the full expressiveness of an MPS is not required, using instead a simpler variational family can yield significant savings, just as a superposition of $k$ product states reduces the cost of expectation-value computations to $O(Nk^2)$.


This product-type ansatz corresponds to the well-established \textit{canonical polyadic decomposition} (CPD) in the broader tensor community. For decades, CPD has been a standard tool in fields ranging from psychometrics to chemistry and signal processing, due to its compactness, interpretability and, under mild conditions, uniqueness \cite{harshman_1970_CP-decomposition, landsberg2011tensors, Kolda_2009_tensor-decompositions-review, Stegeman_2007_uniqueness-cpd}. Although extracting the CPD is generally more challenging than building the tensor-train/MPS, the CPD remains central to tensor methods, with active research into more efficient algorithms to find it, including approaches that convert an existing MPS into CPD form \cite{Yu_2025_comparison-CP-algorithms, Prevost_2025_CPD-from-TT}. This enduring interest further highlights the practical value of the product-type ansätze that we consider in this work.

\section{Preliminaries} \label{sec:setting}

\subsection{MPS-under-permutations definition}

An MPS on $N$ particles can be expressed as
\begin{align}
    \ket{\psi} &:= \sum_{i_1 \dots i_N=1}^d \Tr[A^{[1],i_1} A^{[2],i_2} \dots A^{[N],i_N}] \ket{i_1 \dots i_N} \nonumber \\
    &= \begin{tikzpicture}[scale=.55, baseline={([yshift=-1ex]current bounding box.center)}, thick]
            \begin{scope}[shift={(0,0)}]
        		\draw[shift={(0,0)},dotted] (0.5,0) -- (4,0);
                \MPSTensor{(0,0)}{$A^{[1]}$}{purple}
                \MPSTensor{1.5,0}{$A^{[2]}$}{purple}
                \MPSTensor{4.5,0}{$A^{[N]}$}{purple}
                \draw (-1,0) -- (-1,-0.8) -- (5.5,-0.8) -- (5.5,0);
        	\end{scope}
        \end{tikzpicture}
    \in (\mathbb{C}^d)^{\otimes N},
    \label{eq:defMPS}
\end{align}
where $A^{[n],i}$ are $D_n \times D_{n+1}$ matrices ($D_{N+1} = D_1$) for all $i \in \{0, \dots, d-1\}$, $d$ being the local physical dimension. The quantity $D := \max_i D_i$ is referred to as the \textit{bond dimension} (for a detailed introduction to TNs, see e.g.\ \cite{Cirac_review_2021}). When all the tensors in Eq. (\ref{eq:defMPS}) are the same, the MPS is said to be \textit{translationally invariant} (TI) and denoted by $\ket{\psi_N(A)}$.

In this work, we study states with the property that the Schmidt rank across any arbitrary bipartition is upper bounded by a constant. We refer to such states as \textit{MPS-under-permutations} (MPS-up). 

\begin{mydef}{1}[MPS-up]
    \textit{A state $\ket{\psi}$ is an MPS-up$_{\varepsilon, D}$ if, for each bipartition of its $N$ particles, there is a state that is $\varepsilon$-close to $\ket{\psi}$ and whose Schmidt rank across the cut is upper bounded by $D$.}
\end{mydef}

Due to the fact that the MPS-up$_{0,D}$ property entails that $\max_m \text{rank}(\rho_m) \leq D$ where $\rho_m = \Tr_{m+1, \dots, N} \dyad{\psi}$, it implies that $\ket{\psi}$ is exactly an MPS of bond dimension $D$. Therefore, we can equivalently express the MPS-up$_{\varepsilon, D}$ property using the MPS language as follows.

\begin{mydef}{1}[MPS-up, MPS version] \label{def:MPS-up} 
    \textit{A state is an MPS-up$_{\varepsilon,D}$ if, for each permutation $\pi$ of its $N$ particles, there exists an MPS state with tensors $A^{[n],i}_\pi$ of bond dimensions $D_\pi^{[n]} \leq D$ that is at least $\varepsilon$-close to it,}
    \begin{equation*}
    \left\|
    \begin{array}{c}
		\begin{tikzpicture}[scale=.45, baseline={([yshift=-3ex]current bounding box.center)}, thick]
            \filldraw[fill=purple] (0,0) -- (0,1) -- (2,1) -- (2,0) -- (0,0);
            \draw (1,0.5) node {\scriptsize $\ket{\psi}$};
            \draw (0.25,1) -- (0.25,1.5);
            \draw (0.5,1) -- (0.5,1.5);
            \draw[dotted] (0.75,1.25) -- (1.25,1.25);
            \draw (1.5,1) -- (1.5,1.5);
            \draw (1.75,1) -- (1.75,1.5);
            
            \filldraw[fill=yellow] (0,1.5) -- (0,2.5) -- (2,2.5) -- (2,1.5) -- (0,1.5);
            \draw (1,2) node {\scriptsize $U_\pi$};
            \draw (0.25,2.5) -- (0.25,3);
            \draw (0.5,2.5) -- (0.5,3);
            \draw[dotted] (0.75,2.75) -- (1.25,2.75);
            \draw (1.5,2.5) -- (1.5,3);
            \draw (1.75,2.5) -- (1.75,3);
		\end{tikzpicture}
	\end{array}
    - \begin{array}{c}
		\begin{tikzpicture}[scale=.55, baseline={([yshift=-3ex]current bounding box.center)}, thick]
            \begin{scope}[shift={(0,0)}]
        		\draw[shift={(0,0)},dotted] (0.5,0) -- (4,0);
                \MPSTensorLeft{(0,0)}{$A_\pi^{[1]}$}{amaranth}
                \MPSTensor{1.5,0}{$A_\pi^{[2]}$}{amaranth}
                \MPSTensorRight{4.5,0}{$A_\pi^{[N]}$}{amaranth}
        	\end{scope}
        \end{tikzpicture}
    \end{array} \right\| \leq \varepsilon,
    \end{equation*}
    \textit{where $U_\pi$ is the unitary operator that permutes the $N$ local Hilbert spaces according to $\pi$. We say that $\ket{\psi}$ is an exact MPS-up$_D$ if it is an MPS-up$_{0,D}$.}
\end{mydef}

Note that exact MPS-up are not necessarily permutationally invariant, since the MPS that approximates each permutation $\pi$ can be different. To tackle the problem of characterizing these states, we will begin our analysis with TI MPS, for which theoretical tools have been extensively developed in the literature \cite{Perez-Garcia2007, Cirac_review_2021}. 

 \subsection{Tensor networks background} \label{sec:TNs-background}

A TI MPS with tensor $A$ is called \textit{normal} if its matrices have no non-trivial common invariant subspace, and the associated CP map $\mathcal{E}(X) = \sum_{i=0}^{d-1} A^i X (A^i)^\dagger$ has a unique largest eigenvalue of magnitude (and value) equal to one \cite{Cirac2017}. These MPS display short-range correlations. 

In the general (or \textit{non-normal}) setting, any TI MPS with tensor $A$ can be expressed in terms of a so-called \textit{basis of normal tensors} (BNT) \(\{A_1, \dots, A_b\}\) as \cite{Cirac2017}
\begin{equation}    \label{eq:nonnormal_MPS_intro}  
    \ket{\psi_N(A)} \propto \sum_{j=1}^b \alpha_j \ket{\psi_N(A_j)},
\end{equation} 
where ($i$) each \(A_j\) is normal if every \(p\) sites have already been blocked to get rid of periodic subspaces, where \textit{blocking} refers to grouping together every $p$ physical sites into a new tensor $\tilde{A}$ with physical dimension $d^p$, such that $\tilde{A}^{i_1 i_2 \dots i_p} := A^{i_1} A^{i_2} \dots A^{i_p}$, and $(ii)$ there is \textit{block-injectivity}, meaning that each element $A_j$ can be accessed separately from the others just by acting on the physical index if at least every $L_{BI}$ sites are blocked together, where $L_{BI}$ is referred to as the \textit{block-injectivity length} and is always upper bounded by a constant independent of the system size.

A normal tensor $A$ is \textit{injective} if the map 
\begin{align*}
    \Gamma: \mathcal{M}_D(\mathbb{C}) &\to \mathbb{C}^d \\
    \begin{array}{c} \begin{tikzpicture}[scale=.45,thick,baseline={([yshift=0ex]current bounding box.center)}]
        \simplematrix{0,0}{$X$}{yellow}
    \end{tikzpicture} \end{array}  
    &\mapsto
    \begin{array}{c} 
		\begin{tikzpicture}[scale=.45,thick,baseline={([yshift=0ex]current bounding box.center)}]
			\simplematrix{0,0}{$X$}{yellow}
            \MPSTensor{1.5,0}{$A$}{purple}
            \draw (-1,0) -- (-1,-0.8) -- (2.5,-0.8) -- (2.5,0);
		\end{tikzpicture}
    \end{array}
\end{align*}
is injective, or equivalently, if there exists a left-inverse $A^{-1} : \mathbb{C}^d \to \mathcal{M}_D(\mathbb{C})$ such that
\begin{equation} \label{eq:left-inverse}
    \begin{array}{c} 
		    \scalebox{0.85}{
            \begin{tikzpicture}[scale=.5,thick,baseline={([yshift=0ex]current bounding box.center)}]
            \MPSTensor{0,0}{$A$}{purple}
            \MPSTensordown{0,1.5}{$A^{-1}$}{purple}
		\end{tikzpicture}}
    \end{array} =
    \begin{array}{c} 
		\scalebox{0.85}{\begin{tikzpicture}[scale=.5,thick,baseline={([yshift=0ex]current bounding box.center)}]
            \draw (-1,0) -- (-0.15,0) -- (-0.15,1.5) -- (-1,1.5);
            \draw (1,0) -- (0.15,0) -- (0.15,1.5) -- (1,1.5);
		\end{tikzpicture}}
    \end{array}.
\end{equation}
After blocking a constant number of physical sites together in the TI setting, any normal tensor $A$ becomes injective \cite{Sanz2010, Michalek2018}. This number is called the \textit{injectivity length} $L_I$ of $A$, and it is upper bounded by $L_I \leq 2D^2(6+\log_2 D)$. More generally, after blocking every $L_{BI}$ sites, any non-normal tensor $A$ becomes a direct sum of injective tensors that can be separately accessed through the physical index.

An object of special interest for MPS that will appear in some of the proofs is the so-called \textit{transfer matrix}, defined as
\begin{equation*}
	\begin{array}{c} \begin{tikzpicture}[scale=.5,thick,baseline={([yshift=1ex]current bounding box.center)}]
        \ETensor{0,0}{$\mathbb{E}$}{purple}{0.5}{1.5}
    \end{tikzpicture} \end{array}  
    := \sum_{i=1}^d A^i \otimes (A^{i})^* =
    \begin{array}{c}
		\begin{tikzpicture}[scale=.45,thick,baseline={([yshift=1ex]current bounding box.center)}]
			\MPSTensor{0,0}{$A^\ast$}{purple}
			\MPSTensordown{0,1.5}{$A$}{purple}
		\end{tikzpicture}
    \end{array},
\end{equation*}
where $A^\ast$ represents component-wise complex conjugation. By exploiting the gauge freedom 
$A^i\leftrightarrow XA^i X^{-1}$
of the TI MPS and normalizing it, we can always write a normal tensor in \textit{left canonical form} \cite{Schollwock_2011}, so that $\mathbb{E}^m \xrightarrow{\scriptsize m \to\infty} \dyad{\Lambda}{\mathds{1}}$, where $\ket{\mathds{1}}$ and $\ket{\Lambda}$ are the vectorised versions of the identity matrix, and of a diagonal positive matrix $\Lambda$, respectively. Note that $\| \ket{\psi_N(A)} \|^2  = \Tr[\mathbb{E}^N] \to 1$ as $N \to \infty$.

\section{Summary of results} \label{sec:summary_results}

\begin{table}[!h]
    \centering
    \begin{tabularx}{0.425\textwidth}{c||c|c}
         & \makecell{exact MPS-up$_{D}$ \\ {\footnotesize (fixed $N > N_0$)}} 
         & \makecell{Approx. MPS-up$_{\varepsilon,D}$ \\ {\footnotesize($N \to \infty$)}} \\ \hline\hline
        \makecell{TI normal \\ MPS} 
        & \makecell{$=$ product \\ \scriptsize{(Prop. \ref{prop:rank_counting_normal})}} 
        & \makecell{$=$ product \\ \scriptsize{(if $\varepsilon < \dots $;} \scriptsize{Prop. \ref{prop:purity_normal})}} \\ \hline
        \makecell{TI non-\\ normal} 
        & \makecell{$=$ GHZ-like \\ \scriptsize{(Thm. \ref{thm:exact_MPS-up_thm})}} 
        & \makecell{$=$ GHZ-like \\ \scriptsize{(if $\varepsilon < \dots$; Thm. \ref{thm:approx_MPS-up_thm})}} \\ \hline
        \makecell{Generic \\ Non-TI} 
        & \makecell{$=$ ``product'' \\ \scriptsize{(Prop. \ref{prop:rank_counting_non-uniform_injective})}} 
        & \makecell{$=$ ``product'' \\\scriptsize{(if exp. decay; Prop. \ref{prop:purity_non-TI_injective})}} \\ \hline
        W-type 
        & \multicolumn{2}{c}{$\approx$ GHZ-like \scriptsize{(Table \ref{table:non-TI})}} \\ 
    \end{tabularx}
    \caption{Summary of results.} \label{table:results}
\end{table}

In this work, we first show that the MPS-up property can only hold on TI MPS if the elements of the BNT have bond dimension one, and thus the state is necessarily a superposition of as many product states as the number of elements in the BNT. Indeed, for the exact MPS-up$_{D}$ property, this holds whenever $N$ is large enough, as summarised in the theorem below (see section \ref{sec:TI_exact_MPS-up}).
\begin{mythm}{1}[Exact TI MPS-up, informal] \label{thm:exact_MPSup_informal}
    Let $\ket{\psi}$ be a TI MPS with the exact MPS-up$_{D}$ property on $N$ sites, with sufficiently large $N$. Then, $\ket{\psi}$ can be written as a superposition of as many product states as the number of elements in the BNT of the MPS tensor.
\end{mythm}


In section \ref{sec:TI_approx_MPS-up} we generalize this to approximate TI MPS-up$_{\varepsilon, D}$ with $\varepsilon \geq 0$, we consider families of TI MPS and impose restrictions on $\varepsilon$. 

\begin{mythm}{2}[Approximate TI MPS-up, informal] \label{thm:approx_MPSup_informal}
    Let $\{\ket{\psi_N}\}_N$ be a family of TI MPS with the approximate MPS-up$_{\varepsilon, D}$ property for all $N$. Suppose the MPS tensor has $b$ elements in its basis of normal tensors. Then, $\ket{\psi_N}$ can be written as a superposition of $b$ product states, provided that $\varepsilon < \frac{1}{4D}$ if $b=1$, or $\varepsilon < \frac{C}{8D}$ if $b > 1$, where $C$ is a constant that depends on the relation between the basis elements of the tensor.
\end{mythm}

We extend our results to more general settings in Section \ref{sec:non-TI}, where we prove the following statement.
\begin{mythm}{3}[Non-TI MPS-up, informal] \label{cor:nonTI_MPSup_informal}
    A non-TI MPS with exponentially decaying correlations and the MPS-up$_{\varepsilon, D}$ property can be written as a product state of almost all sites or blocks of a few sites.
\end{mythm}
In section \ref{sec:implications_mps_approx_algorithms} we illustrate how these results can be used to justify the usage of simpler mean-field ansätze under certain conditions in MPS approximation algorithms. Finally, we discuss some relevant examples like the W or Dicke states that lie beyond the scope of our assumptions (section \ref{sec:almost-product}).

Table \ref{table:results} shows a summary of the results and their corresponding locations in the paper. We use the term ``GHZ-like'' to denote änsatze of the form $\sum_i \beta_i \ket{\phi_i}^{\otimes N}$ and $\sum_i \beta_i \otimes_{j=1}^N \ket{\phi_{ij}}$ in the TI and non-TI cases, respectively, due to their similarity with the paradigmatic and long-range ordered GHZ-state $|\text{GHZ}^d_N\rangle = \sum_{i=0}^{d-1} \ket{i}^{\otimes N}$ \cite{Greenberger_1989}.

\section{Exact translationally invariant MPS-up} \label{sec:TI_exact_MPS-up}

We start by considering TI MPS $\ket{\psi_N(A)}$ with normal tensor $A$ and
bond dimension $D_\mathds{1}$, on a fixed number of particles $N$. We assume
that the state is an exact MPS-up$_{D}$, meaning that the Schmidt rank is
bounded across any arbitrary bipartition.\footnote{ Note that, in all the proofs
developed in this work, it is not necessary to assume that the MPS-up property
holds for all permutations. As long as it holds for one permutation $\pi \in
\mathcal{S}_N$ that leads to a contradiction when $N$ is large enough, the
conclusions follow (for example, the permutation $\pi$ in Eq.
(\ref{eq:def_permutation_rank_counting}) for the exact MPS-up case as in
Proposition \ref{prop:rank_counting_normal}, or the permutation sending each
$k$-th particle to the beginning of the chain for the approximate MPS-up case as
in Proposition \ref{prop:purity_normal}). For simplicity, however, we consider
the generic definition.} We do not, however, require that the state is also translational invariant in any other ordering of sites. For the sake of simplicity, let us momentarily assume the more restrictive property of injectivity on the normal tensor $A$.

\begin{lemma}\label{lemma:rank_counting_injective}
    An injective TI MPS with the exact MPS-up property on an even number $N > \log_2 D$ of particles is a product state $\ket{\psi} = \ket{\phi}^{\otimes N}$.
\end{lemma}
\begin{proof}
    The proof is based on a rank counting argument after a certain permutation is performed. Let $\pi \in \mathcal{S}_{N}$ be defined as
    {\small \begin{equation} \label{eq:def_permutation_rank_counting}
        \begin{cases}
            \pi(2k-1) = k, &\text{for } k = 1, \dots, \lceil\frac{N}{2}\rceil \\
            \pi(2k) = \lceil\frac{N}{2}\rceil + k , &\text{for } k = 1, \dots, \lfloor\frac{N}{2}\rfloor
        \end{cases}.
    \end{equation}
    This permutation places the particles in odd positions at the left, and the ones in even positions at the right. For $N = 6$, this means that
    \begin{equation*} \begin{tikzpicture}[scale=.45, baseline={([yshift=-3ex]current bounding box.center)}, thick]
        \foreach \x in {1,2,3,4,5,6}{
            \node[below] at (\x-1,-0.1) {\scriptsize $\x$};
		}
        \draw[thick, fill=purple] (0,0) circle (0.25);
        \draw[thick, fill=darkpurple] (1,0) circle (0.25);
        \draw[thick, fill=purple] (2,0) circle (0.25);
        \draw[thick, fill=darkpurple] (3,0) circle (0.25);
        \draw[thick, fill=purple] (4,0) circle (0.25);
        \draw[thick, fill=darkpurple] (5,0) circle (0.25);
        \node at (6,0) {$\xrightarrow{\pi}$};
        \draw[thick, fill=purple] (7+0,0) circle (0.25);
        \node[below] at (7,-0.1) {\scriptsize 1};
        \draw[thick, fill=purple] (7+1,0) circle (0.25);
        \node[below] at (8,-0.1) {\scriptsize 3};
        \draw[thick, fill=purple] (7+2,0) circle (0.25);
        \node[below] at (9,-0.1) {\scriptsize 5};
        \draw[thick, fill=darkpurple] (7+3,0) circle (0.25);
        \node[below] at (10,-0.1) {\scriptsize 2};
        \draw[thick, fill=darkpurple] (7+4,0) circle (0.25);
        \node[below] at (11,-0.1) {\scriptsize 4};
        \draw[thick, fill=darkpurple] (7+5,0) circle (0.25);
        \node[below] at (12,-0.1) {\scriptsize 6};
    \end{tikzpicture} . \end{equation*}}

    Let $\mathcal{A}^{-1}:=(A^{-1})^{\otimes N}$. On the one hand,
    \begin{equation}
    U_\pi \mathcal{A}^{-1} \ket{\psi} = U_\pi \left( 
    \begin{array}{c} 
		\begin{tikzpicture}[scale=.45,thick,baseline={([yshift=0ex]current bounding box.center)}]
            \draw (-0.5,-0.5) -- (-0.5,0) -- (0-0.15,0) -- (0-0.15,0.6);
            \draw (0+0.15,0.6) -- (0+0.15,0) -- (1-0.15,0) -- (1-0.15,0.6);
            \draw (1+0.15,0.6) -- (1+0.15,0) -- (1+0.5,0);
            \draw[dotted] (1+0.5,0) -- (3-0.5,0);
            \draw (3-0.5,0) -- (3-0.15,0) -- (3-0.15,0.6);
            \draw (3+0.15,0.6) -- (3+0.15,0) -- (3+0.5,0) -- (3+0.5,-0.5) -- (-0.5,-0.5);
		\end{tikzpicture}
    \end{array}  \right) =
    \begin{array}{c} 
		\begin{tikzpicture}[scale=.45,thick,baseline={([yshift=0ex]current bounding box.center)}]
            \draw (-1.5,-0.5) -- (-1.5,0) -- (-1-0.15,0) -- (-1-0.15,1.8);
            \draw (-1+0.15,1.8) -- (-1+0.15,0) -- (-1+1-0.15,0) -- (-1+1-0.15,0.5);
            \draw (-0.5,0) -- (0-0.15,0) -- (0-0.15,0.6);
            \draw (0+0.15,0.4) -- (0+0.15,0) -- (1-0.15,0) -- (1-0.15,0.5);
            \draw (1+0.15,0.6) -- (1+0.15,0) -- (1+0.5,0);
            \draw[dotted] (1.5,0) -- (3,0);
            \draw (3.5-0.5,0) -- (3.5-0.15,0) -- (3.5-0.15,0.4);
            \draw (3.5+0.15,0.6) -- (3.5+0.15,0) -- (4,0) -- (4,-0.5) -- (-1.5,-0.5);
            %
            \draw[thick] (0-0.15,0.6) to[out=90, in=-90] (3-0.15,1.8);
            \draw[thick] (0+0.15,0.4) to[out=90, in=-90] (3+0.15,1.6)--(3+0.15,1.8);
            \draw[thick] (1-0.15,0.5) to[out=90, in=-90] (0-0.15,1.8)--(-0.15,1.8);
            \draw[thick] (1+0.15,0.6) to[out=90, in=-90] (0+0.15,1.8);
            \draw[thick] (3.5-0.15,0.4) to[out=90, in=-90] (1.3-0.15,1.6) -- (1.3-0.15,1.8);
            \draw[thick] (3.5+0.15,0.6) to[out=90, in=-90] (1.3+0.15,1.8);
            %
            \draw[densely dashed,draw=red] (2,-1.6)  -- (2,2.1);
            \draw (2.8,-1.3) node {\scriptsize $D^N_{\mathds{1}}$};
		\end{tikzpicture}
    \end{array} \label{eq:rank_counting_normal_proof_aux}.
    \end{equation}
    The Schmidt rank of this state across the half-chain bipartition $\{1, \dots, \frac{N}{2} \} \cup \{\frac{N}{2} + 1, \dots, N\}$ of the resulting state can be computed exactly, since it consists just of Bell pairs. Its value is $D_\mathds{1}^{N}$.

    On the other hand, noting that $U_\pi$ and $\mathcal{A}^{-1}$ satisfy that\footnote{
    Throughout the paper, we make the abuse of notation of denoting with the same symbol, $U_\pi$, any unitary operator that permutes according to $\pi \in \mathcal{S}_N$ any $N$ local Hilbert spaces, each of which could have different local dimensions. For instance, in the equation $U_\pi \mathcal{A}^{-1} = A_{\pi}^{-1} U_\pi$, the first $U_\pi$ acts on $(\mathcal{M}_{D_{\mathds{1}}}(\mathbb{C}))^{\otimes N}$, while the second $U_\pi$ acts on $(\mathbb{C}^{d})^{\otimes N}$.} $U_\pi \mathcal{A}^{-1} = \mathcal{A}_\pi^{-1} U_\pi$, i.e. 
    \begin{equation}
    U_\pi \mathcal{A}^{-1} \ket{\psi} = \mathcal{A}_\pi^{-1} U_\pi \ket{\psi} = \begin{array}{c}
		\begin{tikzpicture}[scale=.5, baseline={([yshift=-3ex]current bounding box.center)}, thick, every node/.style={scale=0.9}]
            \begin{scope}[shift={(0,0)}]
        		\draw[shift={(0,0)},dotted] (0.5+0.3,0) -- (4+0.3,0);
                \MPSTensorLeft{(0,0)}{$A_\pi^{[1]}$}{amaranth}
                \MPSTensor{1.8,0}{$A_\pi^{[2]}$}{amaranth}
                \MPSTensorRight{4.5,0}{$A_\pi^{[\tilde{N}]}$}{amaranth}
        	\end{scope}
            \filldraw[shift={(0,1.5)},fill=purple] (-1/2-0.1,-1/2) -- (-1/2-0.1,1/2) -- (1/2+0.1,1/2) -- (1/2+0.1,-1/2) -- (-1/2-0.1,-1/2);
            \draw (-0.3, 2) -- (-0.3, 2.5);
            \draw(0.3, 2) -- (0.3, 2.5);
            \draw (0,1.5) node {\scriptsize $\mathcal{A}^{-1}$};
            %
            \filldraw[shift={(1.5+0.3,1.5)},fill=purple] (-1/2-0.1,-1/2) -- (-1/2-0.1,1/2) -- (1/2+0.1,1/2) -- (1/2+0.1,-1/2) -- (-1/2-0.1,-1/2);
            \draw (1.5, 2) -- (1.5, 2.5);
            \draw(2.1, 2) -- (2.1, 2.5);
            \draw (1.8,1.5) node {\scriptsize $\mathcal{A}^{-1}$};
            %
            \filldraw[shift={(4.5,1.5)},fill=purple] (-1/2-0.1,-1/2) -- (-1/2-0.1,1/2) -- (1/2+0.1,1/2) -- (1/2+0.1,-1/2) -- (-1/2-0.1,-1/2);
            \draw (4.2, 2) -- (4.2, 2.5);
            \draw(4.8, 2) -- (4.8, 2.5);
            \draw (4.5,1.5) node {\scriptsize $\mathcal{A}^{-1}$};
            \draw[densely dashed,draw=red] (3.15,-1.2)  -- (3.15,2.6);
            \draw (3.8,-0.9) node {\scriptsize $\leq D$};
        \end{tikzpicture}
    \end{array}, \label{eq:rank_proof_aux}
    \end{equation}
    where we used the MPS-up$_{0,D}$ property to substitute $U_\pi \ket{\psi}$ by an MPS of bond dimension $D_\pi \leq D$.
    
    To be consistent with the previous expression, it is necessary that $D_\mathds{1}^{N/2} \leq D$, which can only hold if \textit{(i)} $D_\mathds{1} > 1$ but $N$ is not too large with respect to $D$, so that the equation is not violated, or \textit{(ii)} $D_\mathds{1} = 1$ and thus $\ket{\psi}$ is a product state. Situation \textit{(i)} is impossible for any $D_\mathds{1} > 1$ if $2^{N} > D$, and the claim follows.
\end{proof}

Note that the resulting product state is of the form $\ket{\psi} = \ket{\phi}^{\otimes N}$, where $\ket{\phi}$ is site-independent, due to the fact that the state is a TI MPS with bond dimension 1 (i.e., with the same tensor at each site).

The injectivity assumption can be easily relaxed to normality by noting that any normal tensor becomes injective when blocking every $L_I$ sites, where $L_I$ is the injectivity length. This observation allows us to extend the desired statement to normal TI exact MPS-up. The proof, which is analogous to the one of Lemma \ref{lemma:rank_counting_injective}, is provided in Appendix \ref{app:proof_rankcountingnormal}.
\begin{restatable}{proposition}{rankcountingnormal} \label{prop:rank_counting_normal}
    A normal TI MPS with the exact MPS-up property on $N > L_I (\log_2 D + 1)$ particles is a product state $\ket{\psi} = \ket{\phi}^{\otimes N}$.
\end{restatable}

In the general case where tensor $A$ is not necessarily normal, we show that any TI exact MPS-up exhibits a GHZ-type entanglement structure if the number of particles $N$ is large enough, since it can be written as a superposition of as many linearly independent product states as the number of elements in the basis of normal tensors of $A$. The proof follows the same strategy as in the normal case, but using block-injectivity to isolate the individual sectors. It can be found in Appendix \ref{app:proof_rankcounting-nonnormal}.
\begin{mythm}{1}[Exact MPS-up] \label{thm:exact_MPS-up_thm}
    Let $\ket{\psi_N(A)}$ be a TI MPS with the exact MPS-up$_{D}$ property on $N$ sites, with $N > pL_{BI}(\log_2 D + 1)$. Then, $\ket{\psi} = \sum_{i=1}^b \beta_i \ket{\phi_i}^{\otimes N}$, where $b$ denotes the number of elements in the BNT of tensor $A$.
\end{mythm}
Note that if the tensor is not already in the canonical form leading to Eq. \eqref{eq:nonnormal_MPS_intro} (i.e., if one still needs to block a certain small amount of sites together, due to the presence of periodic subspaces), then the conclusion is not that the state is a sum of TI product states over all individual sites, but rather a sum of TI products over small clusters (an illustrative example is the state $\ket{10101010\dots} + \ket{01010101\dots}$, which is a sum of TI products of clusters of 2 sites).

\section{Approximate translationally invariant MPS-up}
\label{sec:TI_approx_MPS-up}

We now study the more general scenario of TI MPS admitting an approximate MPS representation for each possible ordering, but not necessarily an exact one, as in the previous section. Equivalently, the state is an MPS-up$_{\varepsilon, D}$ with $\varepsilon \geq 0$. 

Specifically, given a normal tensor $A$, we will consider the corresponding family of TI MPS $\{\ket{\psi_N(A)}\}_N$. In this situation, we cannot use the rank counting argument as we did for exact MPS-up in section \ref{sec:TI_exact_MPS-up}, since the Schmidt rank is not stable under perturbations. Instead, we use an alternative reasoning which relies on the fact that the MPS-up property imposes a lower bound on the purity for all bipartitions to get a contradiction, inspired by \cite{Rolandi2020}.

\begin{proposition}
\label{prop:purity_normal}
    Let $\{\ket{\psi_N(A)}\}_N$ be a family of TI MPS with normal tensor $A$. If it has the MPS-up$_{\varepsilon, D}$ property for all $N$ larger than some $N_0$, with $\varepsilon < \frac{1}{4D}$, then $A$ has bond dimension $1$ and thus $\ket{\psi_N(A)}$ is a product state, $\ket{\psi_N(A)} = \ket{\phi}^{\otimes N}$, for all $N$.
\end{proposition}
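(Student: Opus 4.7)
The plan is a proof by contradiction, in the spirit of Ref.~\cite{Rolandi2020}: assume $D_\mathds{1}>1$ and use the MPS-up property to derive a positive lower bound on the purity of a carefully chosen reduced density matrix which is then violated, for large $N$, by an explicit transfer-matrix computation.

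First, I would convert the MPS-up property into a purity bound. For any permutation $\pi$, let $\ket{\phi_\pi}$ be the bond-$D$ MPS approximating $U_\pi\ket{\psi_N(A)}$ within $\varepsilon$ and let $\rho_\pi,\sigma_\pi$ denote the reduced density matrices of $U_\pi\ket{\psi_N(A)}$ and $\ket{\phi_\pi}$ on the first half of the (post-permutation) chain. From $\mathrm{rank}(\sigma_\pi)\le D$ one gets $\Tr[\sigma_\pi^2]\ge 1/D$, while the standard fidelity/trace-norm bound yields $\|\rho_\pi-\sigma_\pi\|_1\le 2\varepsilon$ and hence $|\Tr[\rho_\pi^2]-\Tr[\sigma_\pi^2]|\le 4\varepsilon$. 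Under $\varepsilon<1/(4D)$ this gives the uniform lower bound $\Tr[\rho_\pi^2]\ge 1/D-4\varepsilon>0$, independent of $N$.

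Second, I would specialize $\pi$ to the permutation that moves every $k$-th original site to the front, for a fixed spacing $k$ (independent of $N$). Writing $N=mk$, the bipartition isolating the first $m$ new sites---originally $\{k,2k,\dots,mk\}$---implements a partial trace over the intervening length-$(k-1)$ blocks, which in the transfer-matrix language inserts $\mathbb{E}^{k-1}$ between consecutive kept sites around the periodic loop. Normality of $A$ and the spectral gap of $\mathbb{E}$ imply $\mathbb{E}^{k-1}\to\dyad{\Lambda}{\mathds{1}}$ exponentially fast in $k$, so $k$ can be chosen to make $\|\mathbb{E}^{k-1}-\dyad{\Lambda}{\mathds{1}}\|\le\delta$ arbitrarily small. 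Replacing $\mathbb{E}^{k-1}$ by its rank-one limit decouples the virtual chain and yields $\rho_\pi\approx\sigma^{\otimes m}$, where $\sigma$ is the translationally invariant single-site reduced density matrix. If $D_\mathds{1}>1$, then $\sigma$ cannot be pure: otherwise translational invariance would force $\ket{\psi_N(A)}$ to be a product state, whose minimal MPS representation has bond dimension one, contradicting the normality of $A$ with $D_\mathds{1}>1$. Hence $p:=\Tr[\sigma^2]<1$, and $\Tr[\rho_\pi^2]\le p^m+O(\delta)$, which tends to $0$ as $m=N/k\to\infty$. For $N$ large enough this falls below the positive constant $1/D-4\varepsilon$ from the first step, yielding the contradiction and forcing $D_\mathds{1}=1$.

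The main technical obstacle is the quantitative propagation of the error $\delta=\|\mathbb{E}^{k-1}-\dyad{\Lambda}{\mathds{1}}\|$ through the $m$ consecutive insertions, so as to uniformly bound the deviation of $\Tr[\rho_\pi^2]$ from $p^m$ while $m\to\infty$; once $k$ is fixed large enough, this reduces to a geometric-series estimate whose total contribution can be made much smaller than $1/D-4\varepsilon$. Minor bookkeeping---the global normalization $\|\ket{\psi_N(A)}\|^2=\Tr[\mathbb{E}^N]=1+o(1)$ in canonical form and the case $N\not\equiv 0\pmod k$ (absorbed into a single endpoint block)---does not affect the asymptotic conclusion that $\ket{\psi_N(A)}=\ket{\phi}^{\otimes N}$ for all $N$.
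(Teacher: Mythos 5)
Your proposal is correct and follows essentially the same route as the paper: the same permutation bringing every $k$-th site to the front, the same lower bound $\Tr[\rho_\pi^2]\ge 1/D-4\varepsilon$ obtained from the MPS-up property together with the trace-distance lemma, and the same transfer-matrix evaluation showing the purity decays like $\eta^{n}$ unless the single-site marginal is pure, which together with normality forces bond dimension one. The only minor deviation is your order of limits---fixing $k$ large and sending $m=N/k\to\infty$ rather than sending $k\to\infty$ at fixed $n$---which creates the error-propagation issue you flag; the paper's ordering sidesteps it entirely, although your geometric-series control does go through precisely because $\eta<1$ in the case being excluded.
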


\begin{proof}
Given any $k \in \mathbb{N}$, let $S^N_k \subseteq \{1, \dots, N\}$ be the subset consisting of every $k$-th particle, and $\rho_S$ the reduced density matrix on $S^N_k$, as depicted in the top left part of the diagram below. 

Take $N = nk > N_0$ for some $n \in \mathbb{N}$. A lower bound on the purity of $\rho_S$ can be readily obtained, by considering the permutation $\pi$ sending all particles in $S^N_k$ to the beginning of the chain, as shown in the bottom left of the diagram.
\begin{equation} \label{eq:permutation_approx_case}
    \begin{array}{c} \begin{tikzpicture}[scale=.5, baseline={([yshift=-3ex]current bounding box.center)}, thick]
        \foreach \x in {0,1,3,4,5,6}{
            \draw[thick, fill=purple] (\x,0) circle (0.25);
            \draw[thick, fill=purple] (\x,2) circle (0.25);
		}
        \draw[thick, fill=darkpurple] (0,0) circle (0.25);
        \draw[thick, fill=darkpurple] (1,0) circle (0.25);
        \draw[thick, fill=darkpurple] (3,0) circle (0.25);
        \node[below] at (0,0) {\scriptsize $k$};
        \node[below] at (1,0) {\scriptsize $2k$};
        \node[below] at (3,0) {\scriptsize $nk$};
        \node[below] at (4,0) {\scriptsize $1$}; 
        \node[below] at (5,0) {\scriptsize $2$};
        \node at (2,0) {\small $\dots$};
        \node at (7.2,0) {\small $\dots$};
        \node at (0.8,0.9) {\scriptsize $\pi$};

        \draw[thick, fill=darkpurple] (2,2) circle (0.25);
        \draw[thick, fill=darkpurple] (5,2) circle (0.25);
        \node at (7.2,2) {\small $\dots$};
        \node[below] at (2,2) {\scriptsize $k$}; 
        \node[below] at (5,2) {\scriptsize $2k$}; 

        \draw[->, thick] (1.9,1.2) -- (0.1,0.35);
        \draw[->, thick] (4.9,1.2) -- (1.1,0.35);

        \node at (8.5,2) {\scriptsize $\leftrightarrow$};
        \node at (8.5,0) {\scriptsize $\leftrightarrow$};

        \FullMPS{10.75,2}{$A$}{purple}
        \begin{scope}[shift={(10.75,0)}]
            \draw[shift={(0,0)},dotted] (0.5,0) -- (4,0);
            \MPSTensorLeft{(0,0)}{$A_\pi^{[1]}$}{amaranth}
            \MPSTensor{1.5,0}{$A_\pi^{[2]}$}{amaranth}
            \MPSTensorRight{4.5,0}{$A_\pi^{[N]}$}{amaranth}
        \end{scope}
        
        \draw[densely dashed,draw=red] (3.5,-1.1)  -- (3.5,0.7);
        \draw[densely dashed,draw=red] (13.75,-1.2)  -- (13.75,0.8);
        \draw[right] (13.75,-1) node {\scriptsize $\leq D$};
    \end{tikzpicture} \end{array}.
\end{equation}

The particles ordered according to $\pi$ admit an approximate MPS representation with tensor $A_\pi$ of bond dimension upper bounded by $D$, due to the MPS-up$_{\varepsilon, D}$ property. This means that the purity of the MPS approximation $A_\pi$ across the dashed red line, $\Tr[(\rho_{\pi,S})^2]$, is lower-bounded by $1/D$. Hence, the purity of $\rho_S$ satisfies
\begin{align} 
    \Tr[(\rho_S)^2] &\geq \text{Tr}[(\rho_{\pi,S})^2] - \left| \Tr[(\rho_S)^2] - \text{Tr}[(\rho_{\pi,S})^2] \right| \nonumber \\
    &\geq \frac{1}{D} - 4 \varepsilon \label{eq:purity1},
\end{align}
where Lemma \ref{lemma:trace_distance} of Appendix \ref{app:proof_technical-lemma} was used to relate the purity to the distance between the states. 

Now, we compute the purity $\Tr[(\rho_S)^2]$ starting from its MPS representation $A$ in left-canonical form and taking the limit of $k \to \infty$, which results in
\begin{align}
    \Tr[(\rho_{S})^2] &= \frac{1}{\Tr[\mathbb{E}^{nk}]^2} \Tr[ \begin{array}{c}
		\begin{tikzpicture}[scale=.45, baseline={([yshift=-5ex]current bounding box.center)}, thick]
            \ETensor{0,0}{$\mathbb{E}^{k-1}$}{purple}{0.8}{0.7}
            \ETensor{0,1.4}{$\mathbb{E}^{k-1}$}{purple}{0.8}{0.7}
            \draw (1.3,0) -- (1.8,1.4);
            \draw (1.3,1.4) -- (1.8,0);
            \draw (1.3,0.7) -- (1.8,0.7);
            \draw (1.3,2.1) -- (1.8,2.1);
            \ETensor{2.8,0}{$\mathbb{E}$}{purple}{0.5}{0.7}
            \ETensor{2.8,1.4}{$\mathbb{E}$}{purple}{0.5}{0.7}
            \draw (3.8,0) -- (4.3,1.4) -- (5.2,1.4);
            \draw (3.8,1.4) -- (4.3,0) -- (5.2,0);
            \draw (3.8,0.7) -- (5.2,0.7);
            \draw (3.8,2.1) -- (5.2,2.1);
            \draw [decorate, decoration = {calligraphic brace,mirror}] (-1,2.3) --  (-1,-0.2);
            \draw [decorate, decoration = {calligraphic brace}] (4.4,2.3) --  (4.4,-0.2);
            \draw (4.8,2.4) node {\scriptsize $n$};
        \end{tikzpicture}
    \end{array} ]
     \nonumber \\
    &\xrightarrow{\scriptsize k\to\infty}
    \left(
    \begin{array}{c}
		\begin{tikzpicture}[scale=.45, baseline={([yshift=0ex]current bounding box.center)}, thick]
            \ETensor{0,0}{$\mathbb{E}$}{purple}{0.5}{0.7}
            \ETensor{0,1.4}{$\mathbb{E}$}{purple}{0.5}{0.7}
            \draw (-1,0.7) -- (-1,1.4);
            \draw (-1,0) -- (-1.3,0) -- (-1.3,2.1) -- (-1,2.1);
            \draw (1,0) -- (1.5,1.4);
            \draw (1,1.4) -- (1.5,0);
            \draw (1,0.7) -- (1.5,0.7);
            \draw (1,2.1) -- (1.5,2.1);
            \RightVector{2.5,0}{$\Lambda$}{yellow}{0.5}{0.7}
            \RightVector{2.5,1.4}{$\Lambda$}{yellow}{0.5}{0.7}
        \end{tikzpicture}
    \end{array}
    \right) ^n =: \eta^n . \label{eq:def_A(D)}
\end{align} 
where $\eta$ is a positive constant ($0 < \eta \leq 1$) that only depends on the properties of tensor $A$. By combining this with Eq. (\ref{eq:purity1}), we obtain
\begin{align} 
    \Tr[(\rho_S)^2] &\xrightarrow{k\to\infty} \eta^n \geq \frac{1}{D} - 4\varepsilon \label{eq:purity2}, 
\end{align}
where the lower bound is a strictly positive quantity if $\varepsilon < \frac{1}{4D}$. This leads to a contradiction if $\eta < 1$, so it must necessarily hold that $\eta = 1$. Note that $\eta$ is the purity of a one-body density matrix $\rho_1$, i.e. $\eta = \Tr[(\rho_1)^2]$ for
\begin{align*}
        \rho_1 := 
        \begin{tikzpicture}[scale=.45, baseline={([yshift=-1ex]current bounding box.center)}, thick]
            \MPSTensor{0,1.75}{$A$}{purple}
            \MPSTensordown{0,-0.25}{$A^*$}{purple}
            \simplematrixvertical{1.5,0.75}{$\Lambda$}{yellow}
            \draw (0.9,1.75) -- (1.5,1.75);
            \draw (0.9,-0.25) -- (1.5,-0.25);
            \draw (-1,-0.25) -- (-1,1.75);
        \end{tikzpicture}
        = \begin{tikzpicture}[scale=.45, baseline={([yshift=-1ex]current bounding box.center)}, thick]
            \MPSTensor{0,1.5}{$C$}{purple}
            \MPSTensordown{0,0}{$C^\dagger$}{purple}
            \draw (-1,0) -- (-1,1.5);
            \draw (1,0) -- (1,1.5);
        \end{tikzpicture}
        \ ,
        \text{ with }
        \begin{tikzpicture}[scale=.45, baseline={([yshift=-1ex]current bounding box.center)}, thick]
            \MPSTensor{0,0}{$C$}{purple}
        \end{tikzpicture}
        := \begin{tikzpicture}[scale=.45, baseline={([yshift=-1ex]current bounding box.center)}, thick]
            \MPSTensor{0,0}{$A$}{purple}
            \simplematrix{1.5,0}{$\Lambda^{\frac{1}{2}}$}{yellow}
        \end{tikzpicture}.
    \end{align*}
This can be checked as follows,
\begin{equation*}
    \Tr[(\rho_1)^2] = 
    \begin{tikzpicture}[scale=.45, baseline={([yshift=-1ex]current bounding box.center)}, thick]
        \begin{scope}[shift={(0,0)}]
            \MPSTensor{0,1.75}{$A$}{purple}
            \MPSTensordown{0,-0.25}{$A^*$}{purple}
            \simplematrixvertical{1.5,0.75}{$\Lambda$}{yellow}
            \draw (0.9,1.75) -- (1.5,1.75);
            \draw (0.9,-0.25) -- (1.5,-0.25);
            \draw (-1,-0.25) -- (-1,1.75);
        \end{scope}
        \begin{scope}[shift={(0,3.5)}]
            \MPSTensor{0,1.75}{$A$}{purple}
            \MPSTensordown{0,-0.25}{$A^*$}{purple}
            \simplematrixvertical{1.5,0.75}{$\Lambda$}{yellow}
            \draw (0.9,1.75) -- (1.5,1.75);
            \draw (0.9,-0.25) -- (1.5,-0.25);
            \draw (-1,-0.25) -- (-1,1.75);
        \end{scope}
        \draw (0,6.25) -- (-1.5,6.25) -- (-1.5,-1.25) -- (0,-1.25);
    \end{tikzpicture}
    = \
    \begin{tikzpicture}[scale=.45, baseline={([yshift=-1ex]current bounding box.center)}, thick]
        \begin{scope}[shift={(0,0)}]
            \MPSTensordown{0,1.5}{$A$}{purple}
            \MPSTensor{0,0}{$A^*$}{purple}
        \end{scope}
        \begin{scope}[shift={(0,3)}]
            \MPSTensordown{0,1.5}{$A$}{purple}
            \MPSTensor{0,0}{$A^*$}{purple}
        \end{scope}
        \draw (-1,1.5) -- (-1,3);
        \draw (-1,0) -- (-1.5,0) -- (-1.5,4.5) -- (-1,4.5);
        \begin{scope}[shift={(2.5,0.75)}]
    		\filldraw[fill=yellow] (-1/2,-1) -- (-1/2,1) -- (1/2,1) -- (1/2,-1) -- (-1/2,-1);
    		\draw (0,0) node {\scriptsize $\Lambda$};
    	\end{scope}
        \begin{scope}[shift={(2.5,3.75)}]
    		\filldraw[fill=yellow] (-1/2,-1) -- (-1/2,1) -- (1/2,1) -- (1/2,-1) -- (-1/2,-1);
    		\draw (0,0) node {\scriptsize $\Lambda$};
    	\end{scope}
        \draw (1,0) -- (1.5,3) -- (2,3);
        \draw (1,3) -- (1.5,0) -- (2,0);
        \draw (1,1.5) -- (2,1.5);
        \draw (1,4.5) -- (2,4.5);
    \end{tikzpicture}
    \ = 
    \eta.
\end{equation*}

Therefore, we have that $\eta = 1$ implies that $\rho_1$ is a pure state and $\text{rank}(CC^\dagger) = 1$, meaning that $C = \dyad{u}{v}$ for some $\ket{u} \in \mathbb{C}^{D^2}$, $\ket{v} \in \mathbb{C}^{d}$,
    \begin{equation*}
        \begin{tikzpicture}[scale=.45, baseline={([yshift=-1ex]current bounding box.center)}, thick]
            \MPSTensor{0,0}{$C$}{purple}
        \end{tikzpicture}
        := \begin{tikzpicture}[scale=.45, baseline={([yshift=-1ex]current bounding box.center)}, thick]
            \draw (-1,0) -- (1,0);
            \draw (0,0.65) -- (0,1.5);
            \draw[thick, fill=yellow] (0,0) circle (0.25);
            \draw[thick, fill=amaranth] (0,0.65) circle (0.25);
        \end{tikzpicture}.
    \end{equation*}
    Then, using the fact that $\Lambda$ is invertible, we have
    \begin{align*}
        \ket{\psi_N(A)} &= 
        \begin{tikzpicture}[scale=.45, baseline={([yshift=-1ex]current bounding box.center)}, thick]
            \FullMPS{(0,0)}{$A$}{purple}{yellow}
        \end{tikzpicture}
        \\
        &= 
        \begin{tikzpicture}[scale=.45, baseline={([yshift=-1ex]current bounding box.center)}, thick]
            \MPSTensor{0,0}{$C$}{purple}
            \rectangularsimplematrix{1.65,0}{ $\Lambda^{-\frac{1}{2}}$}{yellow}{0.15}
            \MPSTensor{3.3,0}{$C$}{purple}
            \rectangularsimplematrix{4.95,0}{ $\Lambda^{-\frac{1}{2}}$}{yellow}{0.15}
            \draw[shift={(0,0)},dotted] (5.95,0) -- (7.2,0);
            \draw (7.2,0) -- (7.7,0) -- (7.7,-0.8) -- (-1,-0.8) -- (-1,0);
        \end{tikzpicture}
        \\
        &= K \cdot 
        \begin{tikzpicture}[scale=.45, baseline={([yshift=-1ex]current bounding box.center)}, thick]
            \draw[thick, fill=amaranth] (0,0) circle (0.25);
            \draw (0,0.25) -- (0,0.75);
            \draw[thick, fill=amaranth] (0.75,0) circle (0.25);
            \draw (0.75,0.25) -- (0.75,0.75);
            \draw[thick, fill=amaranth] (1.5,0) circle (0.25);
            \draw (1.5,0.25) -- (1.5,0.75);
            \draw[shift={(0,0)},dotted] (2,0) -- (3,0);
            \draw[thick, fill=amaranth] (3.5,0) circle (0.25);
            \draw (3.5,0.25) -- (3.5,0.75);
            \draw[thick, fill=amaranth] (4.25,0) circle (0.25);
            \draw (4.25,0.25) -- (4.25,0.75);
        \end{tikzpicture}
    \end{align*}
    for some $K \in \mathbb{C}$. Therefore, $D_\mathds{1} = 1$ and $\ket{\psi_N(A)}$ is a product state for all $N$. 
\end{proof}

Note that, if $\varepsilon$ and $D$ are allowed to depend on $N$, with $D_N = O(\text{poly}(N))$, so that $\ket{\psi_N(A)}$ has the MPS-up$_{\varepsilon_N, D_N}$ property, then $\varepsilon_N < \frac{1}{4D_N}$ is not sufficient, since we could have for example $\varepsilon_N = \frac{1}{4D_N} - \frac{1}{8} \eta^N$. Nevertheless, if we additionally require the existence of a positive sequence $(g_N)$ with $g_N = \Omega(1/\text{poly}(N))$, such that $\varepsilon_N < \frac{1}{4 D_N} - g_N$, then Eq. (\ref{eq:purity2}) still leads to a contradiction and the claim holds.

Finally, we can effectively remove the normality assumption, using Prop. \ref{prop:purity_normal} together with the block-injectivity property. We obtain that any approximate TI MPS-up has a GHZ-type of entanglement, given that $\varepsilon$ is small enough. This is summarized in the following theorem, whose proof is provided in Appendix \ref{app:proof_puritynonnormal}.

\begin{mythm}{2}[Approximate MPS-up] \label{thm:approx_MPS-up_thm}
    Let $\{\ket{\psi_N(A)}\}_N$ be a family of TI MPS with the approximate MPS-up$_{\varepsilon_N, D_N}$ property for all $N$ larger than some $N_0$, where $D_N = O(\text{poly}(N))$. 
    Let $b$ be the number of elements in the BNT of tensor $A$.
    Then, if there exists a positive sequence $(g_N)$ with $g_N = \Omega(1/\text{poly}(N))$, such that either
    \begin{enumerate}[(a)]
        \item $b = 1$ and $0 \leq \varepsilon_N < \frac{1}{4D_N}-g_N$, or
        \item $b > 1$ and $0 \leq \varepsilon_N <  \left( \frac{1}{4D_N} - g_N \right) \frac{\min_i |\alpha_{i}|}{2(\sum_{i} |\alpha_{i}|^2)^{\frac{1}{2}}}$, 
    \end{enumerate}
    where $\alpha_i$ denote the coefficients weighting each element of the BNT of $A$ according to Eq. \eqref{eq:nonnormal_MPS_intro}, then $\ket{\psi_N(A)} = \sum_{i=1}^b \beta_i \ket{\phi_i}^{\otimes N}$.
\end{mythm}
Note that, for the normal case ($b = 1$) with $D_N = O(1)$, the bound becomes $\varepsilon < \frac{1}{8D}$, which is tighter than the bound $\varepsilon < \frac{1}{4D}$ found in Prop. \ref{prop:purity_normal}. Furthermore, the result carries over to the fixed-$N$ setting. In particular, by choosing a single permutation $\pi$ (depicted in Eq. \eqref{eq:permutation_approx_case}) with $k = \lfloor \sqrt{N} \rfloor$, one still reaches the desired contradiction in Eq. \eqref{eq:purity2} unless the state under consideration is a product state, provided that $N$ is sufficiently large.

\section{Non-translationally invariant MPS-up}
\label{sec:non-TI}

We consider now the more general setting of non-TI MPS with the MPS-up$_{\varepsilon, D}$ property. We will use $L$ to denote $L := \min \{ l \in \mathbb{N} \mid D^2 \leq d^l \}$. Note that, given a generic non-TI MPS, blocking every $L$ consecutive sites yields an injective tensor almost surely. The reason for this is that, among all linear maps $f : \mathbb{C}^{D^2} \to \mathbb{C}^{d^L}$ with $D^2 \leq d^L$, those that fail to be injective correspond to rank-deficient matrices, which form a measure-zero set of the full matrix space.

The following proposition shows that generic non-TI MPS with the exact MPS-up property are products of almost all injective blocks of $L$ sites.

\begin{proposition}[Exact non-TI MPS-up]
    \label{prop:rank_counting_non-uniform_injective} 
    An injective non-TI MPS with the exact MPS-up$_{D}$ property on a sufficiently large number $N$ of particles is a product of almost all sites, meaning that it has to satisfy $\prod_{i = i_0}^{N} D_{[i]} \leq D$, where $i_0 = 1$ for even $N$ and $i_0 = 2$ for odd $N$. 
\end{proposition}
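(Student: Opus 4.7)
The plan is to adapt the rank-counting argument of Proposition~\ref{prop:rank_counting_normal} to the non-translationally invariant setting. The key simplification here is that no preliminary blocking is required, since each tensor $A_{[i]}$ is already assumed to be injective and hence admits a left-inverse $A_{[i]}^{-1}$ satisfying \eqref{eq:left-inverse}.

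First, I would form the product operator $\mathcal{A}^{-1} := \bigotimes_{i=1}^{N} A_{[i]}^{-1}$ and apply it to $\ket{\psi}$. By \eqref{eq:left-inverse}, each site becomes transparent: the two virtual legs of $A_{[i]}$ get identified with those of $A_{[i]}^{-1}$, so $\mathcal{A}^{-1}\ket{\psi}$ is literally a tensor product of maximally entangled pairs of dimensions $D_{[1]}, D_{[2]}, \ldots, D_{[N]}$, one pair living across each virtual bond of the original periodic chain. I would then apply the alternating permutation $\pi \in \mathcal{S}_N$ of Eq.~\eqref{eq:def_permutation_rank_counting}, which sends odd sites to the left half of the chain and even sites to the right half. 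Each nearest-neighbour bond $(k-1,k)$ with $k \geq 2$ now connects sites of opposite parity, hence straddles the half-chain cut; a short case analysis shows that the wrap-around bond $(N,1)$ also crosses the cut exactly when $N$ is even, and lies inside the left half when $N$ is odd. Counting contributions, the Schmidt rank of $U_\pi \mathcal{A}^{-1}\ket{\psi}$ across the half-chain cut therefore equals $\prod_{i=i_0}^{N} D_{[i]}$, with $i_0 = 1$ for even $N$ and $i_0 = 2$ for odd $N$.

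On the other hand, commuting the permutation through the product operator as $U_\pi \mathcal{A}^{-1} = \mathcal{A}_\pi^{-1} U_\pi$ (in perfect analogy with Eq.~\eqref{eq:rank_proof_aux}), I would invoke the exact MPS-up$_{0,D}$ property to replace $U_\pi\ket{\psi}$ by an MPS of bond dimension at most $D$. Since a product operator cannot increase the Schmidt rank, the state $\mathcal{A}_\pi^{-1} U_\pi\ket{\psi}$ must also have Schmidt rank at most $D$ across the half-chain cut. Equating the two evaluations yields $\prod_{i=i_0}^{N} D_{[i]} \leq D$. For sufficiently large $N$ this forces all but at most $\log_2 D$ of the bond dimensions to equal one, which is precisely the statement that the MPS factorizes as a tensor product over almost all sites.

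The main obstacle is purely bookkeeping: correctly tracking which virtual bonds cross the half-chain cut after the permutation, and in particular handling the wrap-around bond separately according to the parity of $N$ in order to obtain the correct $i_0$. Everything else is a direct translation of the argument behind Proposition~\ref{prop:rank_counting_normal}, with the advantage that the pointwise injectivity assumption removes any need for blocking and lets us read off the Schmidt rank on the nose as a product of individual bond dimensions.
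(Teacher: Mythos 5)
Your proposal is correct and follows essentially the same route as the paper: drop the blocking step, apply $\mathcal{A}^{-1}=\bigotimes_i A_{[i]}^{-1}$ to turn the state into a product of maximally entangled pairs across the virtual bonds, permute with $\pi$ from Eq.~\eqref{eq:def_permutation_rank_counting}, and compare the exactly computed Schmidt rank $\prod_{i=i_0}^N D_{[i]}$ with the bound $D$ obtained from the MPS-up$_{0,D}$ property via $U_\pi\mathcal{A}^{-1}=\mathcal{A}_\pi^{-1}U_\pi$. Your parity bookkeeping for the wrap-around bond, which the paper leaves implicit, is also correct.
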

\begin{proof}
    We can follow the same steps as in Lemma \ref{lemma:rank_counting_injective}, changing the definition of $\mathcal{A}^{-1}$ to $\mathcal{A}^{-1} := (A^{[1]})^{-1} \otimes \dots \otimes (A^{[N]})^{-1}$. The Schmidt rank across the half-chain bipartition of $U_\pi \mathcal{A}^{-1} \ket{\psi}$, where $\pi$ is the permutation in Eq. (\ref{eq:def_permutation_rank_counting}), is equal to $\prod_{i=i_0}^{N} D_{[i]}$ where $i_0 = 1$ for even $N$, or $i_0 = 2$ for odd $N$. Then, the inequality $\prod_{i = i_0}^{N} D_{[i]} \leq D$ must hold. Informally speaking, this implies that $D_{[i]} = 1$ for many values of $i$ when $D$ is small with respect to $N$, or in other words, $\ket{\psi}$ should be a product of almost all the particles.
\end{proof}

The same conclusion holds for non-TI MPS when the MPS-up property holds only in an approximate sense and exponential decay of correlations is expected. To exclude pathological ``near-singular'' MPS, where the slightest perturbation could destroy injectivity and induce long-range correlations, we make the technical assumption that each injective block of tensors is \textit{well-conditioned}, i.e. $\| A_{[i]}^{-1} \|_{\text{op}} = O(1)$. We provide the proof in Appendix \ref{app:non-TI-MPSup}.
\begin{myprop}{4}[Approximate non-TI MPS-up] \label{prop:purity_non-TI_injective}
    A non-TI MPS-up$_{\varepsilon, D}$ with exponentially decaying correlations, well-conditioned tensors and $\varepsilon < \frac{1}{4D}$, on a sufficiently large number of particles, can be written as a product of almost all sites or blocks of a few sites.
\end{myprop}

Our results can also be extended to non-generic block-injective non-TI MPS, under a suitably strong set of assumptions. For instance, for the exact MPS-up property, it is sufficient to assume that the blocks of the tensors at all sites live in common subspaces, both at the virtual level and at the physical level. For the approximate MPS-up property, additionally requiring ergodicity for each separate sequence of blocks would be enough. This way, one can conclude that the state should be of the form $\ket{\psi} = \sum_{j=1}^b \gamma_j \ket{\psi_j}$, where each $\ket{\psi_j}$ is a product of almost all sites.

\section{From MPS to product-state ansätze}
\label{sec:implications_mps_approx_algorithms}

In this section, we discuss the algorithmic implications of our results in settings without a 1D geometry. In such “geometry-agnostic” problems, we argue in favor of starting directly with a variational ansatz consisting of a product state or a small superposition thereof, rather than an MPS, as it is simpler and more efficient.



To demonstrate this idea, let us first define a weaker notion of MPS-up that is both sufficient and efficiently verifiable. Given a certain permutation $\pi \in \mathcal{S}_N$, we say that $\ket{\psi_1}$ is an \textit{MPS-up$_{\pi, \varepsilon, D}$} if there is an MPS $\ket{\psi_2}$ of bond dimension $\leq D$ such that
\begin{equation} \label{eq:inequality_implications}
    \| U_\pi \ket{\psi_1} - \ket{\psi_2} \| \leq \varepsilon .
\end{equation}
If $\pi$ is carefully chosen, then the MPS-up$_{\pi, \varepsilon, D}$ property is sufficient to establish every entry in Table \ref{table:results}. Indeed, in each of our proofs it is enough to invoke the inequality in Eq. \eqref{eq:inequality_implications} just for one of the following permutations:
\begin{itemize}
    \item Exact MPS-up: a permutation $\pi$ that moves all odd-indexed particles to the left and even-indexed ones to the right (see Eq. \eqref{eq:def_permutation_rank_counting}).
    \item Approximate MPS-up: a permutation $\pi$ that brings to the front those sites whose positions are multiples of $\lfloor\sqrt{N}\rfloor$, given sufficiently large $N$ (depicted in Eq. \eqref{eq:permutation_approx_case} with $k = \lfloor\sqrt{N}\rfloor$), since in this case Eq. \eqref{eq:purity2} would still lead to the desired contradiction unless $\ket{\psi_1}$ is a product state.
\end{itemize}
Hence, there is no need to verify that the state under study is an MPS for all $N!$ permutations, as required by the full MPS-up$_{\varepsilon, D}$ property: checking it for a well-chosen $\pi$ already guarantees our conclusions.

Let $\mathcal{H}$ be a Hamiltonian with a gap $\Delta > 0$ on $N$ particles, with unique ground state $\ket{\psi_0}$ such that $\mathcal{H} \ket{\psi_0} = 0$. Pick a special permutation $\pi$ of the form described above, and let $\ket{\psi_1}$ and $\ket{\psi_2}$ be two bond dimension-$D$ MPS obtained when trying to approximate the ground state with two different orderings of the particles, $\{1, \dots, N\}$ and $\{\pi(1), \dots, \pi(N)\}$, respectively, each with variational energy $\leq \varepsilon$. Then, we have that  
\begin{equation} \label{eq:cor_implications_aux1}
   \| U_\pi \ket{\psi_1} - \ket{\psi_2} \|
   \leq 2 \sqrt{\varepsilon/\Delta},
\end{equation}
by Lemma \ref{lemma:technical_algorithms} of Appendix \ref{sec:app_algorithms}, so $\ket{\psi_1}$ is an MPS-up$_{\pi, \tilde{\varepsilon}, D}$ with $\tilde{\varepsilon} :=2\sqrt{\varepsilon/\Delta}$, and the following corollary about $\ket{\psi_0}$ can be shown, whose proof we include in Appendix \ref{sec:app_algorithms}. 
\begin{mycly}{5} \label{cor:mpsalgorithms}
    Under the above assumptions, the true ground state $\ket{\psi_0}$ of $\mathcal{H}$ is at most $2\sqrt{\varepsilon/\Delta}$ away from one of the simple ansätze listed in the box of Table \ref{table:results} that corresponds to the structure assumed on $\ket{\psi_1}$, provided $\varepsilon/\Delta$ is small enough (i.e. $\ket{\psi_0}$ is a product state, a small superposition thereof, or a product of short-range entangled blocks).
\end{mycly}
Thus, if two distinct MPS orderings as described both achieve good energies, the only way this can happen is if the ground state is already close to a simple product-based structure, so that MPS are effectively unnecessary. Consequently, when no intrinsic one-dimensional geometry is present and only limited entanglement is expected, using a product-type variational ansatz instead of MPS can yield substantial savings in both runtime and memory without sacrificing accuracy.

\section{Almost-product MPS-under-permutations} \label{sec:almost-product}

For completeness, we explore some examples of MPS-under-permutations that lie beyond the scope of our assumptions. These examples include the paradigmatic W state $\ket{W_N}$ \cite{dur2000three} and Dicke states $\ket{D_{n,N}}$ \cite{dicke1954coherence}, along with a subclass $\ket{\chi_{a,N}}$ of the so-called weight states \cite{christandl_optimization_2021}. Given
\begin{align*}
    \ket{\chi_{a,\delta,N}} := \sum_{\substack{i_1+\dots+i_N=a\\ i_1, \dots, i_N \in \{0,1,\dots, \delta\}}} \ket{i_1 i_2 \dots i_N}, 
\end{align*}
they can be written as $\ket{W_N} := \ket{\chi_{1,1,N}}$, $\ket{D_{n,N}} := \ket{\chi_{n,1,N}}$ and $\ket{\chi_{a,N}} := \ket{\chi_{a,a,N}}$. 

They are permutationally invariant and admit minimal MPS representations of the form
\begin{equation} \label{eq:MPS-Xform}
    \ket{\psi_N(X, A)} =
    \begin{array}{c}
		\begin{tikzpicture}[scale=.4, baseline={([yshift=-3ex]current bounding box.center)}, thick]
            \FullMPSX{(0,0)}{$A$}{$X$}{purple}{yellow}
        \end{tikzpicture}
    \end{array},
\end{equation}
whose bond dimensions are specified in the first column of Table \ref{table:non-TI}. However, these representations are non-TI and non-injective, and any attempt to express them with a TI MPS would necessarily result in a bond dimension scaling with the system size, so our previous results cannot be applied in these cases (see 2nd column in Table \ref{table:non-TI}). 

In the corollary below, we prove that the lower bound of $\Omega(N^{1/(3+\delta)})$ for the bond dimension of TI MPS representations that was previously shown for $\ket{W_N}$ and $\ket{\chi_{a,N}}$ in the literature, also holds for a broader range of states, including $\ket{D_{n,N}}$. The proof is included in Appendix \ref{sec:app_DTI_bound}. 
\newpage 
\begin{restatable}{corollary}{lowerboundDTI} \label{cor:lowerbound_DTI}
    Given any family $\{\ket{\psi_N(A^{(N)})}\}_N$ of exact TI MPS-under-permutations whose tensor rank increases with $N$, the bond dimension of any TI MPS representation has to satisfy $D_\text{TI} = \Omega(N^{1/(3+\delta)})$ for each $\delta > 0$.
\end{restatable}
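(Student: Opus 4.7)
The plan is to apply Proposition~\ref{prop:rank_nonnormal} to each member of the family and show that, once the threshold appearing there is polynomially small in the TI bond dimension $D_N := D_{\mathrm{TI}}^{(N)}$, the growing tensor rank forces $D_N$ to be at least $\Omega(N^{1/(3+\delta)})$.

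First I would set up notation: for each $N$, let $b_N$ be the number of elements in the BNT of $A^{(N)}$, $p_N$ the period needed to remove periodic subspaces, and $L_{BI}^{(N)}$ the resulting block-injectivity length. Since $\ket{\psi_N(A^{(N)})}$ is an exact MPS-up$_{0,D_N}$ (any permutation of a PI state is the state itself, so the bond-dimension parameter in Definition~\ref{def:MPS-up} can be taken equal to $D_N$), Proposition~\ref{prop:rank_nonnormal} applies as soon as $N > p_N L_{BI}^{(N)}(\log_2 D_N + 1)$, in which case $\ket{\psi_N(A^{(N)})}$ is a superposition of $b_N$ product states and its tensor rank is at most $b_N \le D_N$.

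The main step is to bound the threshold polynomially in $D_N$. The period $p_N$ is at most $D_N$, since the peripheral eigenvalues of the transfer matrix are $p_N$-th roots of unity and live in a space of dimension $\le D_N^2$. For the block-injectivity length I would combine $L_{BI}^{(N)} \le 3(b_N-1)(L_0^{(N)}+1)$ with $L_0^{(N)} \le 2 D_{\mathrm{max}}^2 (6 + \log_2 D_{\mathrm{max}})$, where $D_{\mathrm{max}} := \max_j D_{j,\mathds{1}}$, and then exploit the constraint $\sum_j r_j D_{j,\mathds{1}} \le D_N$ from the block-diagonal form~\eqref{eq:CF_def} to get $b_N \, D_{\mathrm{max}}^2 \le D_{\mathrm{max}} \cdot D_N \le D_N^{2}$. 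This yields $L_{BI}^{(N)} = O(D_N^{2}\log D_N)$, so that the full threshold is $O(D_N^{3}\log^{2} D_N)$, and hence $O(D_N^{3+\delta})$ for every fixed $\delta > 0$ and all sufficiently large $D_N$.

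The conclusion then follows by a case split for each $N$: either the threshold is exceeded, in which case $\mathrm{rank}_{\otimes}(\ket{\psi_N}) \le D_N$, or it is not, in which case $N \le O(D_N^{3+\delta})$ and directly $D_N = \Omega(N^{1/(3+\delta)})$. Since the tensor rank grows unboundedly with $N$ (in fact at least polynomially for the explicit families $\ket{W_N}$, $\ket{D_{n,N}}$ and $\ket{\chi_{a,N}}$), one of the two alternatives has to give the stated lower bound for each large $N$. The hard part will be the polynomial bookkeeping in the middle step: a crude use of $L_0^{(N)} \le 2 D_N^{2}(6+\log_2 D_N)$ for every block, together with $b_N \le D_N$, would yield only $L_{BI}^{(N)} = \tilde O(D_N^{3})$ and thus the weaker bound $N^{1/(4+\delta)}$. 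Extracting the optimal exponent $3+\delta$ rests on the observation that $b_N$ and $D_{\mathrm{max}}$ cannot simultaneously be large, thanks to $\sum_j r_j D_{j,\mathds{1}} \le D_N$.
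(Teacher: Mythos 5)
Your overall dichotomy (either Proposition~\ref{prop:rank_nonnormal} applies, or $N$ is bounded by the threshold and hence $D_{\mathrm{TI}}=\Omega(N^{1/(3+\delta)})$) matches the paper's, but the first horn of your dichotomy has a genuine gap. In the regime where Proposition~\ref{prop:rank_nonnormal} applies you only conclude $\mathrm{rk}(\ket{\psi_N})\le b_N\le D_{\mathrm{TI}}$, i.e.\ $D_{\mathrm{TI}}\ge \mathrm{rk}(\ket{\psi_N})$. The corollary assumes only that the tensor rank \emph{increases} with $N$; if it grows like, say, $\log N$, this gives $D_{\mathrm{TI}}\gtrsim\log N$, far short of $\Omega(N^{1/(3+\delta)})$, and no contradiction is reached. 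The paper closes this by keeping the MPS-up Schmidt-rank parameter $D$ as the \emph{constant} from the hypothesis rather than identifying it with $D_{\mathrm{TI}}$ (your parenthetical ``the bond-dimension parameter can be taken equal to $D_N$'' throws away exactly the leverage you need, and the justification via permutation invariance is beside the point since MPS-up states need not be PI). Concretely: once $N/2>L_{BI}^{(N)}$, block-injectivity makes the Schmidt rank across a \emph{contiguous, unpermuted} cut exactly $\sum_j D_{j,(N)}^2$, which the MPS-up$_{0,D}$ property bounds by the constant $D$; hence $b_{(N)}$ and every $D_{j,(N)}$ are $O(1)$, so $L_{BI}^{(N)}=O(1)$, Proposition~\ref{prop:rank_nonnormal} applies for all large $N$, and the tensor rank is bounded by a \emph{constant} --- contradicting unbounded growth. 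This forces $N/2\le L_{BI}^{(N)}=O(D_{\mathrm{TI}}^3\log D_{\mathrm{TI}})$ for all large $N$, which is the claimed bound.

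Two secondary points. First, your sharpened bookkeeping inequality $b_N D_{\max}\le D_N$ is false: with $\sum_j r_j D_{j,\mathds{1}}\le D_N$ one can have one block of size $D_{\max}\approx \tfrac{2}{3}D_N$ and $b_N\approx\tfrac{1}{3}D_N$ trivial blocks, so $b_N D_{\max}^2=\Theta(D_N^3)$ in the worst case. Fortunately the sharpening is unnecessary: $\tilde O(D_N^3)$ is already $O(D_N^{3+\delta})$ for every $\delta>0$, so the polylogarithmic factors are absorbed into $\delta$ and your claim that the crude bound only yields $N^{1/(4+\delta)}$ is an arithmetic slip. Second, $p_N\le D_N$ is not justified: each irreducible block has period $p_j\le D_{j,\mathds{1}}$, but $p=\mathrm{lcm}(\{p_j\})$ can be superpolynomial in $\sum_j D_{j,\mathds{1}}$; the paper simply assumes periodicities have been blocked away ($p=1$), and you should either do the same or argue separately.
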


In fact, none of these states can be expressed as a product state or a superposition of a few of them, since their tensor rank scales linearly with $N$ (see 3rd column in Table \ref{table:non-TI}). Thus, even though they have the exact MPS-up$_{D}$ property, the implications of our theorems are violated, as they cannot be written exactly as a superposition of a constant number of product states for all $N$.

Despite this, they can be effectively approximated by a sum of a constant number of product states up to arbitrary accuracy (see 4th column in Table \ref{table:non-TI}), so their entanglement structure still has an almost-product nature, similar to the cases studied in previous sections. Whether this conclusion holds for all MPS-under-permutations beyond these examples remains an open question. 

\begin{table}[t]
    \begin{minipage}{\textwidth}
    \centering
    \begin{tabularx}{0.73\textwidth}{c||c|c|c|c} 
         MPS-up $\ket{\psi}$ & $D$ & $D_\text{TI}$ & rk($\ket{\psi}$) & \underline{rk}($\ket{\psi}$)  \\ \hline\hline
         $\ket{W_N}$ & 2 & $\Omega(N^{\frac{1}{3+\delta}})$ \cite{Perez-Garcia2007, Michalek2018} & $N$ & 2  \\ \hline
         $\ket{D_{n,N}}$ & $\min\{n,N-n\} + 1$ & \makecell{$\Omega(N^{\frac{1}{3+\delta}})$ \\ (Appendix \ref{sec:app_DTI_bound})} & \makecell{$\max\{n,N-n\} + 1$ \\ (Thm. 3 in \cite{chen_tensor_2010})} & \makecell{$\min\{n,N-n\} + 1$ \\ (section IV \cite{vrana_asymptotic_2015})} \\ \hline
         $\ket{\chi_{a,N}}$ & \makecell{$a+1$ \\ (Lemma 3 \cite{christandl_optimization_2021})} & \makecell{$\Omega(N^{\frac{1}{3+\delta}})$ \\ (Lemma 5 \cite{christandl_optimization_2021})} & \makecell{$\geq N+1$ \\ \cite{christandl_optimization_2021}} & \makecell{$a+1$ \\ (Prop. 2 \cite{christandl_optimization_2021})} \\ \hline  
    \end{tabularx}
    \captionsetup{justification=raggedright}
    \caption[foo bar]{ 
    Quantities of interest of the MPS-under-permutations studied in section \ref{sec:almost-product}. Note that $\ket{W_N} = \ket{D_{1,N}} = \ket{\chi_{1,N}}$.
    \linebreak
    $\bullet$ \textbf{1st column:} bond dimension of a non-TI MPS representation of the form in Eq. (\ref{eq:MPS-Xform}). $\ket{W_N}$ admits the MPS $A^0 = \mathds{1}_{2}, A^1 = \dyad{1}{2}, X = \dyad{2}{1}$; given $n \leq \frac{N}{2}$, $\ket{D_{n,N}}$ admits the MPS $A^0 = \mathds{1}_{n+1}$, $A^1 = \sum_{j=1}^{n} \dyad{j}{j+1}$, $X = \sum_{j=1}^{n} \dyad{j+1}{j}$.
    \linebreak
    $\bullet$ \textbf{2nd column:} lower bound on the minimal bond dimension if translational invariance is enforced. Previously known bounds were $\Omega(N^{1/(3+\delta)})$ for $\ket{W_N}$ \cite{Perez-Garcia2007, Michalek2018} and for $\ket{\chi_{a,N}}$ (Lemma 5 \cite{christandl_optimization_2021}). We show the improved lower bound for $\ket{D_{n,N}}$ in Corollary \ref{cor:lowerbound_DTI}.
    \linebreak
    $\bullet$ \textbf{3rd column:} rk($\ket{\psi}$) $:= \min\{r : \ket{\psi} = \sum_{i=1}^r \ket{\phi_i}, \ \ket{\phi_i}$ product state$\}$ is known as the \textit{tensor rank}.
    \linebreak
    $\bullet$ \textbf{4th column:} \underline{rk}($\ket{\psi}$) $:= \min\{r : \ket{\psi} = \lim_{\varepsilon \to 0} \frac{1}{\varepsilon^e} \left( \sum_{i=1}^r \ket{\phi_i(\varepsilon)} \right), \ \ket{\phi_i(\varepsilon)}$ product state$\}$ is known as the \textit{border rank}, and it is lower bounded by the maximal Schmidt rank across any bipartition. The W state has Schmidt rank 2 and can be written $\ket{W_N} = \lim_{\varepsilon \to 0} \frac{1}{2\varepsilon} ([\ket{0} + \varepsilon \ket{1}]^{\otimes N} - [\ket{0} - \varepsilon \ket{1}]^{\otimes N})$, so \underline{rk}$(\ket{W_N}) = 2$. See \cite{landsberg2011tensors} for more details on rk($\ket{\psi}$) and \underline{rk}($\ket{\psi}$).}
    \label{table:non-TI}
    \end{minipage}
\end{table}


\section{Outlook}

We have shown how MPS that are stable under permutations, meaning that any re-arrangement of their particles admits an efficient MPS representation, are at best trivially so, since this is only possible if they are product states or simple combinations thereof depending on the properties of the tensors.

In the TI setting, we prove that states with the MPS-under-permutation property, both in an exact and an approximate sense, are superpositions of as many product states as the number of elements in their basis of normal tensors. In the non-TI setting, we show that they are products of most of their sites, under the additional assumptions of injectivity, as well as ergodicity for the approximate setting. With this, we further highlight how the underlying geometric structure of a problem is the key factor in establishing non-trivial MPS as the preferred ansatz over simpler and less entangled alternatives.

Our results exhaustively characterize the product nature of MPS-under-permutations admitting efficient TI MPS representations, as well as generic non-TI MPS-under-permutations. Moreover, relevant states like the W and Dicke states, which do not fit into these assumptions, can be accurately approximated by such product ansätze. It remains an open question whether this conclusion holds for all MPS-under-permutations. Using the concepts of tensor and border rank, which come from algebraic complexity theory, and have been widely used for the study of multipartite entanglement \cite{chitambar_tripartite_2008, chen_tensor_2010,vrana_asymptotic_2015, christandl2023resource}, this problem can be rephrased as finding whether there exists a family of MPS-under-permutations whose border rank increases with the system size.

\section*{Acknowledgements}

MFL acknowledges support from the International Max Planck Research School for Quantum Science and Technology (IMPRS-QST). This research is part of the Munich Quantum Valley (MQV), which is supported by the Bavarian State Government with funds from the High-tech Agenda Bayern Plus. AMA acknowledges support from the Spanish Agencia Estatal de Investigacion through the grants ``“IFT Centro de Excelencia Severo Ochoa CEX2020-001007-S" and ``Ram\'on y Cajal RyC2021-031610-I'', financed by MCIN/AEI/10.13039/501100011033 and the European Union NextGenerationEU/PRTR. DPG acknowledges support from the Spanish Ministry of Science and Innovation MCIN/AEI/10.13039/501100011033 (grants CEX2023-001347-S and PID2020-113523GB-I00). This work has been financially supported by the Ministry for Digital Transformation and of Civil Service of the Spanish Government through the QUANTUM ENIA project call - Quantum Spain project, and by the European Union through the Recovery, Transformation and Resilience Plan – NextGenerationEU within the framework of the Digital Spain 2026 Agenda. This research was supported in part by Perimeter Institute for Theoretical Physics. Research at Perimeter Institute is supported by the Government of Canada through the Department of Innovation, Science and Economic Development and by the Province of Ontario through the Ministry of Research, Innovation and Science.
NS acknowledges support from the European Union’s Horizon 2020 research and innovation programme through Grant No.\ 863476 (ERC-CoG SEQUAM),  the Austrian Science Fund FWF (Grant DOIs \href{https://doi.org/10.55776/COE1}{10.55776/COE1}, \href{https://doi.org/10.55776/P36305}{10.55776/P36305}, and \href{https://doi.org/10.55776/F71}{10.55776/F71}),  and the European Union -- NextGenerationEU.

\bibliography{apssamp.bib}

\appendix

\section{Proofs for exact MPS-under-permutations}

\subsection{Exact MPS-up with normal tensor} \label{app:proof_rankcountingnormal}

\rankcountingnormal*
\begin{proof}
    The proof is based on a rank counting argument after a certain permutation is performed. Let $\tilde{N} := \lfloor N/L_I \rfloor$, where $L_I$ is the injectivity length defined in section \ref{sec:TNs-background}, and let $\pi \in \mathcal{S}_{\tilde{N}}$ be defined as
    {\small \begin{equation} \label{eq:def_permutation_rank_counting}
        \begin{cases}
            \pi(2k-1) = k, &\text{for } k = 1, \dots, \lceil \frac{\tilde{N}}{2} \rceil \\
            \pi(2k) = \lceil \frac{\tilde{N}}{2} \rceil + k , &\text{for } k = 1, \dots, \lfloor \frac{\tilde{N}}{2} \rfloor
        \end{cases}.
    \end{equation}}
    An example for $\tilde{N} = 7$ can be depicted as 
    \begin{equation*} \begin{tikzpicture}[scale=.45, baseline={([yshift=-3ex]current bounding box.center)}, thick]
        \foreach \x in {1,2,3,4,5,6,7}{
            \node[below] at (\x-1,-0.1) {\scriptsize $\x$};
		}
        \draw[thick, fill=purple] (0,0) circle (0.25);
        \draw[thick, fill=darkpurple] (1,0) circle (0.25);
        \draw[thick, fill=purple] (2,0) circle (0.25);
        \draw[thick, fill=darkpurple] (3,0) circle (0.25);
        \draw[thick, fill=purple] (4,0) circle (0.25);
        \draw[thick, fill=darkpurple] (5,0) circle (0.25);
        \draw[thick, fill=purple] (6,0) circle (0.25);
        \node at (7,0) {$\xrightarrow{\pi}$};
        \draw[thick, fill=purple] (8+0,0) circle (0.25);
        \node[below] at (8,-0.1) {\scriptsize 1};
        \draw[thick, fill=purple] (8+1,0) circle (0.25);
        \node[below] at (9,-0.1) {\scriptsize 3};
        \draw[thick, fill=purple] (8+2,0) circle (0.25);
        \node[below] at (10,-0.1) {\scriptsize 5};
        \draw[thick, fill=purple] (8+3,0) circle (0.25);
        \node[below] at (11,-0.1) {\scriptsize 7};
        \draw[thick, fill=darkpurple] (8+4,0) circle (0.25);
        \node[below] at (12,-0.1) {\scriptsize 2};
        \draw[thick, fill=darkpurple] (8+5,0) circle (0.25);
        \node[below] at (13,-0.1) {\scriptsize 4};
        \draw[thick, fill=darkpurple] (8+6,0) circle (0.25);
        \node[below] at (14,-0.1) {\scriptsize 6};
    \end{tikzpicture} . \end{equation*}

    Let $0 \leq r < L_I$ be the unique integer such that $N = L_I \tilde{N} + r$. Then, we block the tensors of the MPS such that we end up with $r$ blocks of $L_I + 1$ sites, and $\tilde{N}-r$ blocks of $L_I$ sites. For instance, for $N = 14$ and $L_I = 3$, we would block the MPS in the following way,
    \begin{align*}
        &\begin{array}{c} \begin{tikzpicture}[scale=.45,thick,baseline={([yshift=0ex]current bounding box.center)}] 
            \draw (0.25,0) -- (14.75,0) -- (14.75,-0.75) -- (0.25,-0.75) -- (0.25,0);
            \foreach \x in {1,2,3,4,5,6,7,8,9,10,11,12,13,14}{
                \filldraw[fill=purple] (\x-0.25,-0.25) -- (\x-0.25,0.25) -- (\x+0.25,0.25) -- (\x+0.25,-0.25) -- (\x-0.25,-0.25);
                \draw (\x, 0.25) -- (\x, 0.75);
    		}
            \draw[densely dashed,draw=red] (0.6,0.5) -- (4.4,0.5) -- (4.4,-0.5) -- (0.6,-0.5) -- (0.6,0.5);
            \draw[densely dashed,draw=red] (4.6,0.5) -- (8.4,0.5) -- (8.4,-0.5) -- (4.6,-0.5) -- (4.6,0.5);
            \draw[densely dashed,draw=red] (8.6,0.5) -- (11.4,0.5) -- (11.4,-0.5) -- (8.6,-0.5) -- (8.6,0.5);
            \draw[densely dashed,draw=red] (11.6,0.5) -- (14.4,0.5) -- (14.4,-0.5) -- (11.6,-0.5) -- (11.6,0.5);
        \end{tikzpicture} \end{array} \\
        &\to \begin{array}{c} \begin{tikzpicture}[scale=.45,thick,baseline={([yshift=0ex]current bounding box.center)}] 
            \draw (0.25,0) -- (5.1,0) -- (5.1,-0.6) -- (0.25,-0.6) -- (0.25,0);
            \foreach \x in {1,2.25}{
                \filldraw[fill=amaranth] (\x-0.45,-0.25) -- (\x-0.45,0.25) -- (\x+0.45,0.25) -- (\x+0.45,-0.25) -- (\x-0.45,-0.25);
    		}
            \foreach \x in {3.4,4.45}{
                \filldraw[fill=yellow] (\x-0.35,-0.25) -- (\x-0.35,0.25) -- (\x+0.35,0.25) -- (\x+0.35,-0.25) -- (\x-0.35,-0.25);
    		}
            \foreach \center in {1,2.25}{
                \draw (\center-0.15, 0.25) -- (\center-0.15, 0.65);
                \draw (\center-0.05, 0.25) -- (\center-0.05, 0.65);
                \draw (\center+0.05, 0.25) -- (\center+0.05, 0.65);
                \draw (\center+0.15, 0.25) -- (\center+0.15, 0.65);
            }
            \foreach \center in {3.4,4.45}{
                \draw (\center, 0.25) -- (\center, 0.65);
                \draw (\center+0.1, 0.25) -- (\center+0.1, 0.65);
                \draw (\center-0.1, 0.25) -- (\center-0.1, 0.65);
            }
        \end{tikzpicture} \end{array}.
    \end{align*}
    Note that the resulting MPS consists of two different injective tensors, $A : \mathcal{M}_D(\mathbb{C}) \to \mathbb{C}^{d^{L_I+1}}$ and $\tilde{A} : \mathcal{M}_D(\mathbb{C}) \to \mathbb{C}^{d^{L_I}}$ with the same bond dimension $D_\mathds{1}$, whose inverses we denote as $A^{-1}$ and $\tilde{A}^{-1}$. Let $\mathcal{A}^{-1}:=(A^{-1})^{\otimes r} \otimes (\tilde{A}^{-1})^{\otimes (\tilde{N} - r)}$. On the one hand,
    \begin{equation}
    U_\pi \mathcal{A}^{-1} \ket{\psi} = U_\pi \left( 
    \begin{array}{c} 
		\begin{tikzpicture}[scale=.45,thick,baseline={([yshift=0ex]current bounding box.center)}]
            \draw (-0.5,-0.5) -- (-0.5,0) -- (0-0.15,0) -- (0-0.15,0.6);
            \draw (0+0.15,0.6) -- (0+0.15,0) -- (1-0.15,0) -- (1-0.15,0.6);
            \draw (1+0.15,0.6) -- (1+0.15,0) -- (1+0.5,0);
            \draw[dotted] (1+0.5,0) -- (3-0.5,0);
            \draw (3-0.5,0) -- (3-0.15,0) -- (3-0.15,0.6);
            \draw (3+0.15,0.6) -- (3+0.15,0) -- (3+0.5,0) -- (3+0.5,-0.5) -- (-0.5,-0.5);
		\end{tikzpicture}
    \end{array}  \right) =
    \begin{array}{c} 
		\begin{tikzpicture}[scale=.45,thick,baseline={([yshift=0ex]current bounding box.center)}]
            \draw (-1.5,-0.5) -- (-1.5,0) -- (-1-0.15,0) -- (-1-0.15,1.8);
            \draw (-1+0.15,1.8) -- (-1+0.15,0) -- (-1+1-0.15,0) -- (-1+1-0.15,0.5);
            \draw (-0.5,0) -- (0-0.15,0) -- (0-0.15,0.6);
            \draw (0+0.15,0.4) -- (0+0.15,0) -- (1-0.15,0) -- (1-0.15,0.5);
            \draw (1+0.15,0.6) -- (1+0.15,0) -- (1+0.5,0);
            \draw[dotted] (1.5,0) -- (3,0);
            \draw (3.5-0.5,0) -- (3.5-0.15,0) -- (3.5-0.15,0.4);
            \draw (3.5+0.15,0.6) -- (3.5+0.15,0) -- (4,0) -- (4,-0.5) -- (-1.5,-0.5);
            %
            \draw[thick] (0-0.15,0.6) to[out=90, in=-90] (3-0.15,1.8);
            \draw[thick] (0+0.15,0.4) to[out=90, in=-90] (3+0.15,1.6)--(3+0.15,1.8);
            \draw[thick] (1-0.15,0.5) to[out=90, in=-90] (0-0.15,1.8)--(-0.15,1.8);
            \draw[thick] (1+0.15,0.6) to[out=90, in=-90] (0+0.15,1.8);
            \draw[thick] (3.5-0.15,0.4) to[out=90, in=-90] (1.3-0.15,1.6) -- (1.3-0.15,1.8);
            \draw[thick] (3.5+0.15,0.6) to[out=90, in=-90] (1.3+0.15,1.8);
            %
            \draw[densely dashed,draw=red] (2,-1.6)  -- (2,2.1);
            \draw (3.2,-1.2) node {\scriptsize $D_\mathds{1}^{2 \lfloor \frac{\tilde{N}}{2} \rfloor}$};
		\end{tikzpicture}
    \end{array} \label{eq:rank_counting_normal_proof_aux}.
    \end{equation}
    The Schmidt rank of this state across the half-chain bipartition $\{1, \dots, \lceil \frac{\tilde{N}}{2} \rceil \} \cup \{\lceil \frac{\tilde{N}}{2} \rceil + 1, \dots, \tilde{N}\}$ of the resulting state can be computed exactly, since it consists just of Bell pairs. Its value is $D_\mathds{1}^{2 \lfloor \frac{\tilde{N}}{2} \rfloor}$, which is equal to $D_\mathds{1}^{\tilde{N}}$ for even $\tilde{N}$, or $D_\mathds{1}^{\tilde{N}-1}$ for odd $\tilde{N}$.

    On the other hand, noting that $U_\pi$ and $\mathcal{A}^{-1}$ satisfy that $U_\pi \mathcal{A}^{-1} = \mathcal{A}_\pi^{-1} U_\pi$ for a product operator $\mathcal{A}_\pi^{-1}$ that is a product of $A^{-1}$ and $\tilde{A}^{-1}$ in a different ordering,
    \begin{equation}
    U_\pi \mathcal{A}^{-1} \ket{\psi} = \mathcal{A}_\pi^{-1} U_\pi \ket{\psi} = \begin{array}{c}
		\begin{tikzpicture}[scale=.5, baseline={([yshift=-3ex]current bounding box.center)}, thick, every node/.style={scale=0.9}]
            \begin{scope}[shift={(0,0)}]
        		\draw[shift={(0,0)},dotted] (0.5+0.3,0) -- (4+0.3,0);
                \MPSTensorLeft{(0,0)}{$A_\pi^{[1]}$}{amaranth}
                \MPSTensor{1.8,0}{$A_\pi^{[2]}$}{amaranth}
                \MPSTensorRight{4.5,0}{$A_\pi^{[\tilde{N}]}$}{amaranth}
        	\end{scope}
            \filldraw[shift={(0,1.5)},fill=purple] (-1/2-0.3,-1/2) -- (-1/2-0.3,1/2) -- (1/2+0.3,1/2) -- (1/2+0.3,-1/2) -- (-1/2-0.3,-1/2);
            \draw (-0.3, 2) -- (-0.3, 2.5);
            \draw(0.3, 2) -- (0.3, 2.5);
            \draw (0,1.5) node {\scriptsize $\mathcal{A}_{\pi}^{-1,[1]}$};
            %
            \filldraw[shift={(1.5+0.3,1.5)},fill=purple] (-1/2-0.3,-1/2) -- (-1/2-0.3,1/2) -- (1/2+0.3,1/2) -- (1/2+0.3,-1/2) -- (-1/2-0.3,-1/2);
            \draw (1.5, 2) -- (1.5, 2.5);
            \draw(2.1, 2) -- (2.1, 2.5);
            \draw (1.8,1.5) node {\scriptsize $\mathcal{A}_{\pi}^{-1,[2]}$};
            %
            \filldraw[shift={(4.5,1.5)},fill=purple] (-1/2-0.35,-1/2) -- (-1/2-0.35,1/2) -- (1/2+0.35,1/2) -- (1/2+0.35,-1/2) -- (-1/2-0.35,-1/2);
            \draw (4.2, 2) -- (4.2, 2.5);
            \draw(4.8, 2) -- (4.8, 2.5);
            \draw (4.5,1.5) node {\scriptsize $\mathcal{A}_{\pi}^{-1,[\tilde{N}]}$};
            \draw[densely dashed,draw=red] (3.15,-1.2)  -- (3.15,2.6);
            \draw (3.8,-0.9) node {\scriptsize $\leq D$};
        \end{tikzpicture}
    \end{array}, \label{eq:rank_proof_aux}
    \end{equation}
    where we used the MPS-up$_{0,D}$ property to substitute $U_\pi \ket{\psi}$ by an MPS of bond dimension $D_\pi \leq D$. 
    
    To be consistent with the previous expression, it is necessary that $D_\mathds{1}^{2 \lfloor \tilde{N}/2 \rfloor} \leq D$, which can only hold if \textit{(i)}~$D_\mathds{1} > 1$ but $\tilde{N}$ is not too large with respect to $D$, so that the equation is not violated, or \textit{(ii)} $D_\mathds{1} = 1$ and thus $\ket{\psi}$ is a product state. Situation \textit{(i)} is impossible for any $D_\mathds{1} > 1$ if $4^{\lfloor \tilde{N}/2 \rfloor} > D$, and the claim follows.
\end{proof}

\subsection{Exact MPS-up with nonnormal tensor} \label{app:proof_rankcounting-nonnormal}

Here we show how to remove the assumption of normality to extend the applicability of our results to all TI exact MPS-up. First, we give more details needed about the CF for nonnormal tensors. Any TI MPS $\ket{\psi}$ with tensor $B$ can be written in a block-diagonal form upon an appropriate gauge transformation,
\begin{equation*} 
    B^i = \bigoplus_{j=1}^l \gamma_{j} B_j^i, 
\end{equation*}
where each $B_j$ has no nontrivial invariant subspace, and its associated CP map $\mathcal{E}_j$ has largest eigenvalue 1 \cite{Cirac_review_2021}. However, if the associated CP map $\mathcal{E}_j$ has more than one eigenvalue of magnitude 1, of the form $e^{i 2\pi q_i/p_i}$ for some $q_i, p_i \in \mathbb{Z}$, with $gcd(q_i,p_i) = 1$ and $p_i > 1$, then there are periodic subspaces that will cause the appearance of more invariant subspaces upon blocking.

Since the number of particles $N$ is fixed in our set-up and $\ket{\psi_N(B_i)} = 0$ if $p_i \not| N$ \cite{de_las_cuevas_irreducible_2017}, we will assume that $p_i \mid N$ for all blocks $C_i$. Let $p := lcm (\{p_j\})$, which satisfies $p \mid N$, and define the blocked tensor $A$ as $A^{i_1 \dots i_p} := B^{i_1} \dots B^{i_p}$. Then, $\ket{\psi} = \ket{\psi_{\bar{N}}(A)}$ with $\bar{N} := N/p$, and tensor $A$ no longer has any periodic subspaces. Now, $A$ can be written in the so-called \textit{canonical form} (CF) as 
\begin{equation} \label{eq:CF_def}
    A^i = \bigoplus_{j=1}^b \bigoplus_{q=1}^{r_j} \mu_{j,q} A_j^i, 
\end{equation}
where $|\mu_{j,q}| \leq 1$, $\forall j,q$, and the normal tensors $\{A_j\}_{j=1, \dots, b}$ form the so-called \textit{basis of normal tensors} (BNT) \cite{Cirac2017}, which has the property that each of its elements can be accessed separately by acting on the physical index after blocking at least $L_{BI} \leq 3(b-1)(L_0+1)$ sites together ($L_0$ is the maximum injectivity length over all blocks in the BNT) \cite{Perez-Garcia2007}. This is known as the \textit{block-injectivity} property, and it is equivalently expressed as the existence of operators $\mathbb{P}_i$ such that
\begin{equation} \label{eq:def_block_inj}
    \begin{array}{c} 
		\begin{tikzpicture}[scale=.4,thick,baseline={([yshift=0ex]current bounding box.center)}]
            \MPSTensor{0,0}{$a_i$}{purple}
            \PhysicalOperator{0,1.5}{$\mathbb{P}_i$}{yellow}
		\end{tikzpicture} = 
        \begin{tikzpicture}[scale=.4,thick,baseline={([yshift=0ex]current bounding box.center)}]
            \MPSTensor{0,0}{$a_i$}{purple}
		\end{tikzpicture} \ , \ \
        \begin{tikzpicture}[scale=.4,thick,baseline={([yshift=0ex]current bounding box.center)}]
            \MPSTensor{0,0}{$a_i$}{purple}
            \PhysicalOperator{0,1.5}{$\mathbb{P}_j$}{yellow}
		\end{tikzpicture}
    \end{array} \ = 0 , \ \forall i \neq j ,
\end{equation}
where the tensor $a_i^{k_1 \dots k_{L_{BI}}} := A_i^{k_1} \dots A_i^{k_{L_{BI}}}$. Therefore, the state can be written as
\begin{equation} \label{eq:state_nonnormal}
    \ket{\psi} = \frac{1}{c_N} \sum_{j=1}^b \alpha_j \ket{\psi_j}.
\end{equation}
for some normalization constant $c_N$, $\alpha_j := \sum_{j=1}^{r_j} (\mu_{j,q})^{\bar{N}}$, and $\ket{\psi_j} := \ket{\psi_{\bar{N}}(A_j)}$. The normal case studied before is recovered when $b = 1$ and there is just one non-zero coefficient, $\mu_{1,1} = 1$.

In the following statement, we show that any TI MPS with the exact MPS-up$_{D}$ property has a GHZ-type entanglement structure if the number of particles $N$ is large enough, since it can be written as a superposition of $b$ linearly independent product states. This reflects the fact that non-normal MPS-up should consist of $b$ independent physical sectors due to the block-injectivity property, with little entanglement.
\begin{mythm}{1}[Exact MPS-up] \label{thm:exact_MPS-up_thm}
    Let $\ket{\psi_N(A)}$ be a TI MPS with the exact MPS-up$_{D}$ property on $N$ sites, with $N > pL_{BI}(\log_2 D + 1)$. Then, $\ket{\psi} = \sum_{i=1}^b \beta_i \ket{\phi_i}^{\otimes N}$, where $b$ denotes the number of elements in the BNT of tensor $A$.
\end{mythm}
\begin{proof}
    This can be shown analogously to Proposition \ref{prop:rank_counting_normal}. Given a TI MPS with tensor $B$, we start by blocking every $p$ sites together to remove periodicities. This results in a TI MPS with blocked tensor $A$ on $\bar{N} = N/p \in \mathbb{Z}$ sites, with canonical form as in Eq. (\ref{eq:CF_def}) and BNT $\{A_1, \dots, A_b\}$.
    
    Let $\tilde{N} := \lfloor \bar{N}/L_{BI} \rfloor$. Define $\pi \in \mathcal{S}_{\tilde{N}}$ as the permutation in Eq. (\ref{eq:def_permutation_rank_counting}), and $r$ as the unique integer such that $\bar{N} = L_{BI} \tilde{N} + r$ with $0 \leq r < L_{BI}$. We block tensors like we did in Proposition \ref{prop:rank_counting_normal}, ending up with $r$ blocks of $L_{BI}+1$ sites with tensor $a$ (where $a^{i_1 \dots i_{L_{BI}+1}} := A^{i_1} \dots A^{i_{L_{BI}+1}}$) and $\tilde{N}-r$ blocks of $L_{BI}$ sites with tensor $\tilde{a}$ (where $\tilde{a}^{i_1 \dots i_{L_{BI}}} := A^{i_1} \dots A^{i_{L_{BI}}}$). Note that the number of elements in the BNT and the bond dimensions $D_{i, \mathds{1}}$ and $\tilde{D}_{i, \mathds{1}}$ of each element remain stable upon blocking.
    
    Let $\mathcal{A}_i^{-1} := (a_i^{-1} \mathbb{P}_i)^{\otimes r} \otimes (\tilde{a}_i^{-1} \tilde{\mathbb{P}}_i)^{\otimes (\tilde{N}-r)}$. Then, 
\begin{align*}
    U_\pi \mathcal{A}^{-1}_i \ket{\psi} &= U_\pi \left( \begin{array}{c}
		\begin{tikzpicture}[scale=.5, baseline={([yshift=-3ex]current bounding box.center)}, thick]
            \FullMPS{(0,0)}{\small $a$}{purple}
            \begin{scope}[shift={(4.5,0)}]
        		\draw (-1,0) -- (1,0);
        		\draw (0,1) -- (0,0);
        		\filldraw[fill=purple] (-1/2,-1/2) -- (-1/2,1/2) -- (1/2,1/2) -- (1/2,-1/2) -- (-1/2,-1/2);
        		\draw (0,0) node {\small $\tilde{a}$};
        	\end{scope}
            \PhysicalOperator{(0,1.5)}{$\mathbb{P}_i$}{yellow};
            \PhysicalOperator{(1.5,1.5)}{$\mathbb{P}_i$}{yellow};
            \PhysicalOperator{(4.5,1.5)}{$\tilde{\mathbb{P}}_i$}{yellow};
            \InverseMPS{(0,3)}{$a_i^{-1}$}{purple};
            \InverseMPS{(1.5,3)}{$a_i^{-1}$}{purple};
            \InverseMPS{(4.5,3)}{$\tilde{a}_i^{-1}$}{purple};
        \end{tikzpicture}
    \end{array} \right) \\
    &= U_\pi \left( {\small \frac{\alpha_{i,N}}{c_N}} \begin{array}{c}
		\begin{tikzpicture}[scale=.5, baseline={([yshift=-3ex]current bounding box.center)}, thick]
            \FullMPS{(0,0)}{$a_i$}{purple}
            \begin{scope}[shift={(4.5,0)}]
        		\draw (-1,0) -- (1,0);
        		\draw (0,1) -- (0,0);
        		\filldraw[fill=purple] (-1/2,-1/2) -- (-1/2,1/2) -- (1/2,1/2) -- (1/2,-1/2) -- (-1/2,-1/2);
        		\draw (0,0) node {\scriptsize $\tilde{a}_i$};
        	\end{scope}
            \InverseMPS{(0,1.5)}{$a_i^{-1}$}{purple};
            \InverseMPS{(1.5,1.5)}{$a_i^{-1}$}{purple};
            \InverseMPS{(4.5,1.5)}{$\tilde{a}_i^{-1}$}{purple};
        \end{tikzpicture}
    \end{array} + 0\right) \\
    &= \frac{\alpha_{i,N}}{c_N}
    \begin{array}{c} 
		\begin{tikzpicture}[scale=.45,thick,baseline={([yshift=-0.2ex]current bounding box.center)}]
            \draw (-1.5,-0.5) -- (-1.5,0) -- (-1-0.15,0) -- (-1-0.15,1.8);
            \draw (-1+0.15,1.8) -- (-1+0.15,0) -- (-1+1-0.15,0) -- (-1+1-0.15,0.5);
            \draw (-0.5,0) -- (0-0.15,0) -- (0-0.15,0.6);
            \draw (0+0.15,0.4) -- (0+0.15,0) -- (1-0.15,0) -- (1-0.15,0.5);
            \draw (1+0.15,0.6) -- (1+0.15,0) -- (1+0.5,0);
            \draw[dotted] (1.5,0) -- (3,0);
            \draw (3.5-0.5,0) -- (3.5-0.15,0) -- (3.5-0.15,0.4);
            \draw (3.5+0.15,0.6) -- (3.5+0.15,0) -- (4,0) -- (4,-0.5) -- (-1.5,-0.5);
            \draw[thick] (0-0.15,0.6) to[out=90, in=-90] (3-0.15,1.8);
            \draw[thick] (0+0.15,0.4) to[out=90, in=-90] (3+0.15,1.6)--(3+0.15,1.8);
            \draw[thick] (1-0.15,0.5) to[out=90, in=-90] (0-0.15,1.8)--(-0.15,1.8);
            \draw[thick] (1+0.15,0.6) to[out=90, in=-90] (0+0.15,1.8);
            \draw[thick] (3.5-0.15,0.4) to[out=90, in=-90] (1.3-0.15,1.6) -- (1.3-0.15,1.8);
            \draw[thick] (3.5+0.15,0.6) to[out=90, in=-90] (1.3+0.15,1.8);
            \draw[densely dashed,draw=red] (2,-1.6)  -- (2,2.1);
            \draw (3.2,-1.2) node {\scriptsize $D_{i,\mathds{1}}^{2 \lfloor \frac{\tilde{N}}{2} \rfloor}$};
		\end{tikzpicture}
    \end{array}.
\end{align*}
The Schmidt rank across the half-chain bipartition $\{1, \dots, \lceil \frac{\tilde{N}}{2} \rceil \} \cup \{\lceil \frac{\tilde{N}}{2} \rceil + 1, \dots, \tilde{N}\}$ of the resulting state can be computed exactly again here, which is $D_{i,\mathds{1}}^{2 \lfloor \frac{\tilde{N}}{2} \rfloor}$.

On the other hand, since $U_\pi$ and $\mathcal{A}_i^{-1}$ satisfy that $U_\pi \mathcal{A}_i^{-1} = \mathcal{A}_{i,\pi}^{-1} U_\pi$ for a product operator $\mathcal{A}_{i,\pi}^{-1}$ consisting of the product of $a_i^{-1} \mathbb{P}_i$ and $\tilde{a}_i^{-1} \tilde{\mathbb{P}}_i$ in a different ordering, we can use the MPS-up$_{0,D}$ property as was done in Eq. (\ref{eq:rank_proof_aux}) to obtain that the Schmidt rank should be no larger than $D$.

Therefore, it is necessary that $D_{i,\mathds{1}}^{2 \lfloor \frac{\tilde{N}}{2} \rfloor} \leq D$, which can only hold if \textit{(i)} $D_{i,\mathds{1}} > 1$ but $\tilde{N}$ is not too large with respect to $D$, so that the equation is not violated, or \textit{(ii)} $D_{i,\mathds{1}} = 1$ and thus $\ket{\psi_i}$ is a product state. The same conclusion holds for each $i$-th block of the BNT, and the claim follows by Eq. (\ref{eq:state_nonnormal}) and by noting that situation \textit{(i)} is impossible for any $D_{\mathds{1}} > 1$ if $4^{\lfloor \tilde{N}/2 \rfloor} > D$.

By leveraging the translational invariance of the MPS, we can further conclude that each of the terms $\ket{\psi_j}$ in Eq. \eqref{eq:state_nonnormal} takes the form $|\phi_i\rangle^{\otimes N}$.
\end{proof}

\section{Proofs for approximate MPS-under-permutations}
\label{app:proof_puritygeneral}

\subsection{Technical lemma for normal approximate MPS-up} \label{app:proof_technical-lemma}

A technical lemma that we use in the proofs for the approximate MPS-up property is the following. 
\begin{lemma} \label{lemma:trace_distance}
    Given two states $\ket{\psi_1}, \ket{\psi_2}$ with density matrices $\rho_1, \rho_2$, and reduced density matrices over subset $S$ denoted as $\rho_1^s = \Tr_{s^c}[\rho_1], \rho_2^s:= \Tr_{s^c}[\rho_2]$, then
    \begin{equation*}
         |\Tr[(\rho_1^s)^2] - \Tr[(\rho_2^s)^2]|
         \leq 4 \| \ket{\psi_1} - \ket{\psi_2} \|.
    \end{equation*}
\end{lemma}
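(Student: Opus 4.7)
The plan is to expand the difference of purities as a single trace of a product of operators, then split the resulting expression via Hölder's inequality into a Schatten-$1$ norm and an operator norm. Writing
\begin{equation*}
    \Tr[(\rho_1^s)^2] - \Tr[(\rho_2^s)^2] = \Tr\bigl[(\rho_1^s - \rho_2^s)(\rho_1^s + \rho_2^s)\bigr],
\end{equation*}
Hölder gives an upper bound of $\|\rho_1^s - \rho_2^s\|_1 \cdot \|\rho_1^s + \rho_2^s\|_\infty$. The operator-norm factor is trivially bounded by $2$ since each reduced density matrix has operator norm at most $1$, so a factor of $2$ in the final constant is accounted for immediately.

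The remaining task is to bound $\|\rho_1^s - \rho_2^s\|_1$ by $2\|\ket{\psi_1} - \ket{\psi_2}\|$. First, by contractivity of the partial trace under the trace norm, $\|\rho_1^s - \rho_2^s\|_1 \leq \|\rho_1 - \rho_2\|_1$. Then I would use the standard pure-state identity by writing
\begin{equation*}
    \rho_1 - \rho_2 = \ket{\psi_1}(\bra{\psi_1} - \bra{\psi_2}) + (\ket{\psi_1} - \ket{\psi_2})\bra{\psi_2},
\end{equation*}
and applying the triangle inequality for $\|\cdot\|_1$ together with the fact that rank-one operators $\ket{u}\bra{v}$ have trace norm $\|u\|\,\|v\|$. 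Since $\ket{\psi_1}$ and $\ket{\psi_2}$ are unit vectors, each of the two terms contributes $\|\ket{\psi_1} - \ket{\psi_2}\|$, giving $\|\rho_1 - \rho_2\|_1 \leq 2 \|\ket{\psi_1} - \ket{\psi_2}\|$.

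Combining these bounds yields the desired factor of $4$. No step should pose any real difficulty; the only mild subtlety is ensuring that the two-term decomposition of $\rho_1 - \rho_2$ is handled carefully so that the constant is exactly $2$ and not something weaker (such as the $2\sqrt{1-|\langle\psi_1|\psi_2\rangle|^2}$ bound one gets from the Fuchs--van de Graaf relation, which would also suffice but is a slightly longer route). The proof is essentially a chain of standard norm inequalities.
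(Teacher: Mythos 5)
Your proof is correct, and the constant of $4$ comes out exactly as in the paper. The first half of your argument (writing the purity difference as $\Tr[(\rho_1^s-\rho_2^s)(\rho_1^s+\rho_2^s)]$, applying H\"older to peel off a factor of $2$ from $\|\rho_1^s+\rho_2^s\|_\infty$, and then using contractivity of the partial trace) is essentially identical to the paper's, which merely splits the same expression into the two terms $\Tr[\rho_1^s(\rho_1^s-\rho_2^s)]+\Tr[\rho_2^s(\rho_1^s-\rho_2^s)]$ before bounding each. Where you genuinely diverge is in the final step, bounding $\|\rho_1-\rho_2\|_1$ by $2\|\ket{\psi_1}-\ket{\psi_2}\|$: the paper takes precisely the "longer route" you flag as avoidable, namely the trace-distance--fidelity inequality $\|\rho_1-\rho_2\|_1\le 2\sqrt{1-|\braket{\psi_1}{\psi_2}|^2}$, followed by the algebraic observation that $\mathrm{Re}(\braket{\psi_1}{\psi_2})^2\ge 1-\|\ket{\psi_1}-\ket{\psi_2}\|^2$. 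Your telescoping decomposition $\rho_1-\rho_2=\ket{\psi_1}(\bra{\psi_1}-\bra{\psi_2})+(\ket{\psi_1}-\ket{\psi_2})\bra{\psi_2}$ combined with $\|\ket{u}\bra{v}\|_1=\|u\|\,\|v\|$ is more elementary, avoids any appeal to fidelity, and lands on the same factor of $2$; the paper's route has the mild advantage of making explicit the intermediate bound in terms of $|\braket{\psi_1}{\psi_2}|$, but for the purposes of this lemma the two are interchangeable.
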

\begin{proof}
    Using the fact that $|\Tr[AB]| \leq \| A \|_\infty \| B \|_1$,
    \begin{align}
       |&\Tr[(\rho_1^s)^2] - \Tr[(\rho_2^s)^2]| \nonumber \\
       &= \Tr[\rho_1^s (\rho_1^s - \rho_2^s)] + \Tr[\rho_2^s (\rho_1^s - \rho_2^s)]  \nonumber \\
       &\leq \left( \| \rho_1^s \|_\infty + \| \rho_2^s \|_\infty \right) \| \rho_1^s - \rho_2^s \|_1 \nonumber \\
       &\leq 2 \| \rho_1 - \rho_2 \|_1, \label{eq:lemma_aux2}
    \end{align}
    where we used $\| \rho_i^s\|_\infty \leq 1$ and the monotonicity of the trace distance under the partial trace in the last line. On the other hand, due to the inequality between trace distance and fidelity (see Theorem 9.3.1 in \cite{Wilde2013}), we have
    \begin{align}
        \|\rho_1 - \rho_2\|_1 &\leq 2 \sqrt{1 - |\braket{\psi_1}{\psi_2}|^2} \nonumber \\
        &\leq 2 \sqrt{1 - \text{Re}(\braket{\psi_1}{\psi_2}) ^2}. \label{eq:lemma_aux1}
    \end{align}
    Noting that $\| \ket{\psi_1} - \ket{\psi_2} \|^2 = 2(1 - \text{Re}(\braket{\psi_1}{\psi_2}))$,
    \begin{align*}
        \text{Re}(\braket{\psi_1}{\psi_2})^2 &= 1 - \| \ket{\psi_1} - \ket{\psi_2}\|^2 + \frac{\|\ket{\psi_1} - \ket{\psi_2}\|^4}{4} \\
        &\geq 1 - \| \ket{\psi_1} - \ket{\psi_2} \|^2, 
    \end{align*}
    and hence from Eq. (\ref{eq:lemma_aux1}),
    \begin{equation*}
        \| \rho_1 - \rho_2 \|_1 \leq 
        2 \| \ket{\psi_1} - \ket{\psi_2} \|.
    \end{equation*}
    Plugging this into Eq. (\ref{eq:lemma_aux2}) completes the proof of the lemma. 
\end{proof}

\subsection{Approximate MPS-up with nonnormal tensor} \label{app:proof_puritynonnormal}

Here we show how to tackle the general case of families of TI MPS $\{\ket{\psi_N(A)}\}_N$ for any tensor $A$, with the approximate MPS-up$_{\varepsilon, D}$ property with $\varepsilon \geq 0$ and for all $N > N_0$ for some constant $N_0$. In order to do so, we introduce a notion of angle $\theta_i \in [0, \frac{\pi}{2}]$ quantifying the distinguishability of each block $A_i$ in the BNT with respect to the other blocks. Denoting as $V_i$ the physical subspaces associated to element $A_i$ of the BNT, where 
\begin{equation*}
    V_i := \left\{ 
    \begin{tikzpicture}[scale=.4,thick,baseline={([yshift=-0.7ex]current bounding box.center)}]
        \simplematrix{0,0}{$X$}{yellow}
        \MPSTensor{1.5,0}{$A_i$}{purple}
        \draw (-1,0) -- (-1,-0.8) -- (2.5,-0.8) -- (2.5,0);
	\end{tikzpicture} \ \bigg| \
    \begin{tikzpicture}[scale=.4,thick,baseline={([yshift=-0.5ex]current bounding box.center)}]
        \simplematrix{0,0}{$X$}{yellow}
    \end{tikzpicture} \in \mathcal{M}_{D_i}(\mathbb{C})
    \right\} \subseteq \mathbb{C}^d,
\end{equation*}
the quantity $\theta_i$ is defined as the angle between the physical subspace $V_i$ and its complement $V_i^c$ \cite{Deutsch_angle_1995}, where $V_i^c := V_1 + \dots + \hat{V}_i + \dots + V_m$, and the hat indicates that the term is omitted in the sum. That is,
\begin{equation*}
    \cos \theta_i := \sup \{ | \braket{x}{y} | \mid x \in V_i, \ y \in V_i^c, \ \| x \|, \| y \| \leq 1 \}.
\end{equation*} 
The block-injectivity property is equivalent to $V_i \cap V_i^c = \{0\}$ for all $i$ \cite{Perez-Garcia2007}, implying that $\cos\theta_i < 1$. The operator $\mathbb{P}_i$ defined in Eq. (\ref{eq:def_block_inj}) is the unique projection satisfying Im$(\mathbb{P}_i) = V_i$ and Ker$(\mathbb{P}_i) = V_i^c$, and its operator norm is related to the angle $\theta_i$ as $\| \mathbb{P}_i \|_\text{op} = \csc\theta_i \geq 1$, with $\| \mathbb{P}_i \|_\text{op} = 1$ if and only if $V_i$ and $V_i^c$ are orthogonal subspaces \cite{Buckholtz_projector_1999}.

If we assume that every $L$ sites are blocked together, and let $V_{i,L}$ denote the physical subspace of the blocked $i$-th element in the BNT, with $\theta_{i,L}$ being its corresponding angle, then $\cos\theta_{i,L} = \text{sup}\{ \Tr[(X_1 \otimes X_2) \mathbb{E}^L_{ij}] \mid \forall X_1, X_2, j \neq i \} = O(\max_{j\neq i} |\lambda_{ij}|^L)$, where $\lambda_{ij}$ denotes the maximum eigenvalue of $\mathbb{E}_{ij}$ ($|\lambda_{ij}| < 1$ by Lemma A.2 in \cite{Cirac2017}). Therefore, $\sin\theta_{i,L} = 1+O(\xi_i^{2L})$, where $\xi_i := \max_j |\lambda_{ij}|$, and the blocks become orthogonal to each other as $L \to \infty$. 

We restate Theorem \ref{thm:approx_MPS-up_thm} below for convenience, and proceed to prove it.

\begin{mythm}{2}[Approximate MPS-up]
    Let $\{\ket{\psi_N(A)}\}_N$ be a family of TI MPS with the approximate MPS-up$_{\varepsilon_N, D_N}$ property for all $N$ larger than some $N_0$, where $D_N = O(\text{poly}(N))$. 
    Let $b$ be the number of elements in the BNT of tensor $A$.
    Then, if there exists a positive sequence $(g_N)$ with $g_N = \Omega(1/\text{poly}(N))$, such that either
    \begin{enumerate}[(a)]
        \item $b = 1$ and $0 \leq \varepsilon_N < \frac{1}{4D_N}-g_N$, or
        \item $b > 1$ and $0 \leq \varepsilon_N <  \left( \frac{1}{4D_N} - g_N \right) \frac{\min_i |\alpha_{i}|}{2(\sum_{i} |\alpha_{i}|^2)^{\frac{1}{2}}}$, 
    \end{enumerate}
    where $\alpha_i$ denote the coefficients weighting each element of the BNT of $A$ according to Eq. \eqref{eq:nonnormal_MPS_intro}, then $\ket{\psi_N(A)} = \sum_{i=1}^b \beta_i \ket{\phi_i}^{\otimes N}$.
\end{mythm}
\begin{proof}
Part (a) of the statement has already been shown in Proposition \ref{prop:purity_normal} in the main text. Here we proceed to prove part (b). Assume that every $L \geq pL_{BI}$ sites have already been blocked together, such that tensor $A$ is block-injective. Let $\mathbb{P}_{i,L}$ be the physical projector accessing the $i$-th element of the BNT of the blocked tensor. For simplicity, consider $N = kL$ for $k \in \mathbb{N}$. 

Due to the MPS-up$_{\varepsilon_N, D_N}$ property, we know that for each $N$ and each permutation $\pi$, there exists a (normalised) MPS state $\ket{\psi_\pi}$ that approximates $U_\pi \ket{\psi}$ with bond dimension at most $D_N$, so that
\begin{equation*}
\| U_\pi \ket{\psi} - \ket{\psi_\pi} \| \leq \varepsilon. 
\end{equation*}
Then, noting that $U_\pi$ and $\mathbb{P}_{i,L}^{\otimes (N/L)}$ commute, we have
\begin{align}
    &\left\| U_\pi \ket{\psi_i} - \frac{c_N}{\alpha_i} \mathbb{P}_{i,L}^{\otimes (N/L)} \ket{\psi_\pi} \right\| \nonumber \\
    &\quad = \frac{c_N}{|\alpha_i|} \| U_\pi \mathbb{P}_{i,L}^{\otimes (N/L)} \ket{\psi} - \mathbb{P}_{i,L}^{\otimes (N/L)} \ket{\psi_\pi} \| \nonumber \\
    &\quad = \frac{c_N}{|\alpha_i|} \| \mathbb{P}_{i,L}^{\otimes (N/L)} ( U_\pi \ket{\psi} - \ket{\psi_\pi} ) \| \nonumber \\
    &\quad \leq \frac{c_N}{|\alpha_i|} \varepsilon \|\mathbb{P}_{i,L} \|^{N/L}_\text{op}
    \label{eq:purity_nonnormal_2},
\end{align}
where $\|\mathbb{P}_{i,L} \|_\text{op} = \csc \theta_{i,L}$, and $c_N$ is the normalization constant appearing in Eq. \eqref{eq:state_nonnormal}. Now, in order to obtain the MPS-up property individually for each block of the BNT, we show that the normalized state associated to $e^{-i\beta_i} \mathbb{P}_{i,L}^{\otimes (N/L)} \ket{\psi_\pi}$ gives the desired approximation to $U_\pi \ket{\psi_i}$, where $\beta_i$ is the angle appearing in $\alpha_i = |\alpha_i| e^{i\beta_i}$. Then,
\begin{align*}
    &\left\| \frac{U_\pi \ket{\psi_i}}{\| \ket{\psi_i}\|} - e^{-i\beta_i} \frac{\mathbb{P}_{i,L}^{\otimes (N/L)}  \ket{\psi_\pi}}{\| \mathbb{P}_{i,L}^{\otimes (N/L)}  \ket{\psi_\pi} \|} \right\| \\
    &\quad \leq 
    \frac{1}{\|\ket{\psi_i}\|} \left\| U_\pi \ket{\psi_i} - \frac{c_N}{\alpha_i} \mathbb{P}_{i,L}^{\otimes (N/L)} \ket{\psi_\pi} \right\| \\
    &\quad \quad +
    \left| \frac{c_N}{|\alpha_i|} \frac{\| \mathbb{P}_{i,L}^{\otimes (N/L)} \ket{\psi_\pi} \|}{\| \ket{\psi_i}\|} - 1 \right|
    \\
    &\quad \leq \frac{2 \varepsilon}{|\alpha_i|} \frac{c_N}{\| \ket{\psi_i} \|} \| \mathbb{P}_{i,L}\|_\text{op}^{N/L}
\end{align*}
where we used the triangle inequality, and Eq. (\ref{eq:purity_nonnormal_2}) together with the fact that $| \| x \| - \| y \| | \leq \| x-y \|$ in order to upper bound the second term. Therefore, for $$|\tilde{\psi}_\pi\rangle := e^{-i\beta_i} \frac{\mathbb{P}_{i,L}^{\otimes (N/L)}  \ket{\psi_\pi}}{ \| \mathbb{P}_{i,L}^{\otimes (N/L)}  \ket{\psi_\pi}\|},$$ which is an MPS of bond dimension $\leq D_N$, we have that the state $\ket{\psi_i}$ generated by the $i$-th element of the BNT has the MPS-up$_{\tilde{\varepsilon}_{i,N}, D_N}$ property,
\begin{equation*} 
\left\| \frac{U_\pi \ket{\psi_i}}{\| \ket{\psi_i}\|} - |\tilde{\psi}_\pi\rangle \right\| \leq \underbrace{\frac{2\varepsilon_N}{|\alpha_{i}|}  \frac{c_N}{\Tr[\mathbb{E}_{A_i}^N]} \| \mathbb{P}_{i,L} \|_\text{op}^{N/L} }_{=: \tilde{\varepsilon}_{i,N}}.
\end{equation*}

If it holds that $\tilde{\varepsilon}_{i,N} < \frac{1}{4 D_N} - g_N$, for some positive sequence $(g_N)$, with $g_N = \Omega(1/\text{poly}(N))$, for all $i$ and for all $N$ sufficiently large, or equivalently in terms of the original parameter $\varepsilon_N$, if
\begin{equation} \label{eq:upper_bound_first_version}
    \varepsilon_N <  \left( \frac{1}{4D_N} - g_N \right) \frac{\min_i |\alpha_{i}|  (\sin\theta_{i,L})^{N/L} }{2(\sum_{i} |\alpha_{i}|^2)^{\frac{1}{2}}},
\end{equation}
then we can use Proposition \ref{prop:purity_normal} to conclude that each $A_i$ in the BNT has bond dimension one, and thus $\ket{\psi_i}$ are product states for all $i$. Note that $\alpha_i$ could depend on $N$. To obtain Eq. (\ref{eq:upper_bound_first_version}), we used $\Tr[\mathbb{E}^N_{A_i}] \to 1$ as $N \to \infty$ and $c_N \to (\sum_{j=1}^b |\alpha_{j,N}|^2)^{1/2}$, because
\begin{align*}
    c_N &= \left(\sum_{j,j' = 1}^b \alpha_{j'}^* \alpha_j \braket{\psi_{j'}}{\psi_j}\right)^\frac{1}{2} \\
    &= \left(\sum_{j = 1}^b |\alpha_j|^2 \|\ket{\psi_j}\|^2 + \sum_{j\neq j'} \alpha_{j'}^* \alpha_j \Tr[\mathbb{E}_{jj'}^N]\right)^\frac{1}{2},
\end{align*}
where $\| \ket{\psi_j} \| \to 1$ as $N \to \infty$, and the transfer matrices $\mathbb{E}_{jj'} := \sum_i A_j^i \otimes (A_{j'}^i)^*$ have leading eigenvalue strictly smaller than 1 for $j \neq j'$ (lemma A.2 in \cite{Cirac2017}).

The condition in Eq. (\ref{eq:upper_bound_first_version}) might appear overly restrictive due to the exponentially decaying term. However, this is not actually the case, due to the fact that the blocks become orthogonal in the limit $L \to \infty$, as pointed out before.

To make this more explicit, take $L = O(\log N)$. Then, we have that $(\sin\theta_{i,L})^{N/L} = (1+O(\xi_i^{2L}))^{N/L} = (1 + O(1/N) )^{N/\log N} \xrightarrow{N \to \infty} 1$. Thus, even if $\sin \theta_{i,L} < 1$, we can rewrite the sufficient condition on $\varepsilon_N$ for $\ket{\psi_i}$ to be a product state as
\begin{equation}
    \varepsilon_N <  \left( \frac{1}{4D_N} - g_N \right) \frac{\min_i |\alpha_{i}|}{2(\sum_{i} |\alpha_{i}|^2)^{\frac{1}{2}}},
\end{equation}
for sufficiently large $N$. 
\end{proof}

\vspace{1mm}

\section{Proofs for non-TI MPS-under-permutations} \label{app:non-TI-MPSup}

In this appendix we establish the following proposition from the main text:
\begin{myprop}{4}[Approximate non-TI MPS-up]
    A non-TI MPS with exponentially decaying correlations and the MPS-up$_{\varepsilon, D}$ property with $\varepsilon < \frac{1}{4D}$, on a sufficiently large number of particles, can be written as a product of almost all sites or blocks of a few sites.
\end{myprop}
The proof proceeds in two steps, each detailed in the following two subsections. In section \ref{app:proof_nonTI-MPS-properties} we show that any such non-TI MPS is ergodic, meaning thats its associated sequence of quantum channels converges exponentially fast to a rank-one replacement channel. Then, in section \ref{app:proof_nonTI_genericMPS}, we use ergodicity together with the MPS-up property to conclude that the state must factorize into a product of almost all sites or blocks of a few sites.

\subsection{Non-TI MPS with exponentially decaying correlations are ergodic} \label{app:proof_nonTI-MPS-properties}

A state $\psi$ on $N$ sites has \textit{exponentially decaying correlations} \cite{brandao_2015_exp_decay_corr} if, for any given regions $X, Y$ separated by more than $l$ sites, there exist $l_0 \in \mathbb{N}$ and $\xi \in \mathbb{C}$ such that
\begin{align} 
    \text{Corr}(X:Y) &\equiv \max_{\substack{O,O' \\ \|O\|_{\text{op}}\leq 1 \\ \|O'\|_{\text{op}}\leq 1}} |\Tr[(O \otimes O')(\rho_{XY} - \rho_X \otimes \rho_Y)]| \nonumber \\
    &\leq Ke^{-l/\xi} \label{eq:exp_decay_corr}
\end{align}
for all $l \geq l_0$ and some constant $K$. Here $\rho_X$ denotes the reduced density matrix of $\ket{\psi}$ on region $X$.

Beyond it physical relevance and prevalence, exponentially decaying correlations are also \textit{generic}, in the sense that any random non-uniform MPS, chosen from the ensemble of MPS that can be sequentially prepared by a quantum computer, satisfies Eq. \eqref{eq:exp_decay_corr} \cite{haag_2023_typical_corr_length} and is in left-canonical form by construction. This means that there exist diagonal positive matrices $\Lambda_{[n]}$ with $\Tr[\Lambda_{[n]}] = 1$ such that
\begin{equation}\label{eq:decay_2}
    \begin{tikzpicture}[scale=.45, baseline={([yshift=-1ex]current bounding box.center)}, thick]
        \begin{scope}[shift={(0,0)}]
            \MPSTensordownrecthoriz{0,1.5}{$A_{[n]}$}{purple}{1.1}
            \MPSTensorrecthoriz{0,0}{$A_{[n]}^*$}{purple}{1.1}
        \end{scope}
        \draw (-1,0) -- (-1,1.5);
    \end{tikzpicture}
    \ = \ \begin{tikzpicture}[scale=.45, baseline={([yshift=-1ex]current bounding box.center)}, thick]
        \draw (0,0) -- (-1,0) -- (-1,1.5) -- (0,1.5);
    \end{tikzpicture}
    \quad
    \text{and}
    \quad 
    \begin{tikzpicture}[scale=.45, baseline={([yshift=-1ex]current bounding box.center)}, thick]
        \begin{scope}[shift={(0,0)}]
            \MPSTensordownrecthoriz{0,1.5}{$A_{[n]}$}{purple}{1.1}
            \MPSTensorrecthoriz{0,0}{$A_{[n]}^*$}{purple}{1.1}
        \end{scope}
        \draw (1,0) -- (1.5,0) -- (1.5,1.5) -- (1,1.5);
        \begin{scope}[shift={(1.5,0.75)}]
    		\filldraw[fill=yellow] (-1/2-0.1,-1/2) -- (-1/2-0.1,1/2) -- (1/2+0.1,1/2) -- (1/2+0.1,-1/2) -- (-1/2-0.1,-1/2);
    		\draw (0,0) node {\scriptsize $\Lambda_{[n]}$};
    	\end{scope}
    \end{tikzpicture}
    \ = \ \begin{tikzpicture}[scale=.45, baseline={([yshift=-1ex]current bounding box.center)}, thick]
        \draw (-1.5,0) -- (0,0) -- (0,1.5) -- (-1.5,1.5);
        \begin{scope}[shift={(0,0.75)}]
    		\filldraw[fill=yellow] (-1/2-0.4,-1/2) -- (-1/2-0.4,1/2) -- (1/2+0.4,1/2) -- (1/2+0.4,-1/2) -- (-1/2-0.4,-1/2);
    		\draw (0,0) node {\scriptsize $\Lambda_{[n-1]}$};
    	\end{scope}
    \end{tikzpicture}
\end{equation}
By blocking few sites together, one may moreover assume injectivity of each tensor without loss of generality, as explained in the main text. Finally, to avoid pathological ``near-singular'' cases, in which a tiny perturbation could break injectivity and induce long range correlations, we shall assume that each tensor is \textit{well-conditioned}, i.e. $\|A_{[i]}^{-1}\|_{\rm op} = O(1)$.

Under these generic conditions, we now prove that the MPS is \textit{ergodic}, meaning that its associated sequence of quantum channels converges exponentially fast to a replacement rank-one channel \cite{movassagh_2021_ergodic, movassagh_ergodic_2022} or, equivalently, if for any $i, j \in \mathbb{N}$ ($i \leq j$), there exist $\xi \geq 0$ and positive density operators $\Lambda_{[x]}$ such that
\begin{equation} \label{eq:condition_non-TI_purity_proof}
    \| \mathbb{E}_{i,j} - \Lambda_{[i-1]} \Tr[\cdot] \| \leq C e^{-|j-i|/\xi},
\end{equation}
for some constant $C = O(1)$, and $\mathbb{E}_{i,j}$ defined as
\begin{equation*}
    \begin{tikzpicture}[scale=.45, baseline={([yshift=-0.5ex]current bounding box.center)}, thick]
        \ETensor{0,0}{$\mathbb{E}_{i,j}$}{purple}{0.8}{0.7}
    \end{tikzpicture}
    := 
    \begin{tikzpicture}[scale=.45, baseline={([yshift=-0.5ex]current bounding box.center)}, thick]
        \ETensor{0,0}{$\mathbb{E}_{[i]}$}{purple}{0.8}{0.7}
        \ETensor{2.2,0}{$\mathbb{E}_{[i+1]}$}{purple}{0.9}{0.7}
        \node at (4.3,0.35) {\small $\dots$};
        \ETensor{6.4,0}{$\mathbb{E}_{[j-1]}$}{purple}{0.9}{0.7}
    \end{tikzpicture},
\end{equation*}
where $\mathbb{E}_{[x]}$ denotes the transfer matrix of the MPS tensor at the physical site $x$.

\begin{lemma} \label{lemma:nonTI-MPS-generic-implies-ergodic}
    An injective non-TI MPS on $N$ sites in left-canonical form, with exponentially decaying correlations and well-conditioned tensors, i.e. $\|A_{[i]}^{-1}\|_{\rm op} = O(1)$, is necessarily ergodic.
\end{lemma}
\begin{proof}
    Let $X, Y$ be the single-site regions $X = \{i-1\}$, $Y = \{j\}$, and for every $\ket{\alpha} \in \mathbb{C}^{D_{[i]}^2}$, $\ket{\beta} \in \mathbb{C}^{D_{[j]}^2}$ with $\| \ket{\alpha} \|_2 = \| \ket{\beta} \|_2 = 1$, define $O_\alpha, O'_\beta$ as
    \begin{equation*}
        O_\alpha \equiv 
        \begin{tikzpicture}[scale=.45, baseline={([yshift=-0.5ex]current bounding box.center)}, thick]
            \draw (0,-1) -- (0,1);
            \draw (1.6,-1) -- (1.6,1);
            \begin{scope}[shift={(0,0)}]
        		\filldraw[fill=yellow] (-1/2-0.4,-1/2) -- (-1/2-0.4,1/2) -- (1/2+0.4,1/2) -- (1/2+0.4,-1/2) -- (-1/2-0.4,-1/2);
        		\draw (0,0) node {\scriptsize $\Lambda_{[i-1]}$};
        	\end{scope}
            \begin{scope}[shift={(1.6,0)}]
        		\filldraw[fill=amaranth] (-1/2,-1/2) -- (-1/2,1/2) -- (1/2,1/2) -- (1/2,-1/2) -- (-1/2,-1/2);
        		\draw (0,0) node {\scriptsize $\alpha$};
        	\end{scope}
            \begin{scope}[shift={(0,1.5)}]
        		\filldraw[fill=purple] (-0.5,-1/2) -- (-0.5,1/2) -- (2.1,1/2) -- (2.1,-1/2) -- (-0.5,-1/2);
        	\draw (0.8,0) node {\scriptsize $A_{[i-1]}^{-1}$};
        	\end{scope}
            \draw (0.8,2) -- (0.8,2.5);
            \begin{scope}[shift={(0,-1.5)}]
        		\filldraw[fill=purple] (-0.5,-1/2) -- (-0.5,1/2) -- (2.1,1/2) -- (2.1,-1/2) -- (-0.5,-1/2);
        	\draw (0.8,0) node {\scriptsize $(A_{[i-1]}^{-1})^\ast$};
        	\end{scope}
            \draw (0.8,-2) -- (0.8,-2.5);
        \end{tikzpicture}
        , \quad O'_\beta \equiv \
        \begin{tikzpicture}[scale=.45, baseline={([yshift=-0.5ex]current bounding box.center)}, thick]
            \draw (0.2,-1) -- (0.2,1);
            \draw (1.4,-1) -- (1.4,1);
            \begin{scope}[shift={(0.2,0)}]
        		\filldraw[fill=amaranth] (-1/2,-1/2) -- (-1/2,1/2) -- (1/2,1/2) -- (1/2,-1/2) -- (-1/2,-1/2);
        		\draw (0,0) node {\scriptsize $\beta$};
        	\end{scope}
            \begin{scope}[shift={(0.8,1.5)}]
        		\filldraw[fill=purple] (-1/2-0.6,-1/2) -- (-1/2-0.6,1/2) -- (1/2+0.6,1/2) -- (1/2+0.6,-1/2) -- (-1/2-0.6,-1/2);
        	    \draw (0,0) node {\scriptsize $A_{[j]}^{-1}$};
        	\end{scope}
            \draw (0.8,2) -- (0.8,2.5);
            \begin{scope}[shift={(0.8,-1.5)}]
        		\filldraw[fill=purple] (-1/2-0.6,-1/2) -- (-1/2-0.6,1/2) -- (1/2+0.6,1/2) -- (1/2+0.6,-1/2) -- (-1/2-0.6,-1/2);
        	    \draw (0,0) node {\scriptsize $(A_{[j]}^{-1})^\ast$};
        	\end{scope}
            \draw (0.8,-2) -- (0.8,-2.5);
        \end{tikzpicture}
    \end{equation*}
    Using the fact that the MPS tensors are in left-canonical form, we can rewrite Eq. \eqref{eq:exp_decay_corr} as
    \begin{align}
         &K e^{-|i-j|/\xi} \nonumber \\
         &\geq \sup_{\substack{\alpha, \beta \\ \|\ket{\alpha}\|_2 = 1 \\ \| \ket{\beta} \|_2 = 1}}
         \frac{\left|\Tr[(O_\alpha \otimes O'_\beta) (\rho_{XY} - \rho_X \otimes \rho_Y)]\right|}{\| O_\alpha \|_{\text{op}} \| O'_\beta \|_{\text{op}}}
         \nonumber \\
         &= \sup_{\substack{\alpha, \beta \\ \|\ket{\alpha}\|_2 = 1 \\ \| \ket{\beta} \|_2 = 1}}
         \frac{1}{\| O_\alpha \|_{\text{op}} \| O'_\beta \|_{\text{op}}} \left| \
         \begin{tikzpicture}[scale=.45, baseline={([yshift=-0.5ex]current bounding box.center)}, thick]
            \draw (-1.5,0) -- (-1.5,3) -- (7,3) -- (7,0) -- (-1.5,0);
            \draw (0,0) -- (0,3);
            \draw (5.2,0) -- (5.2,3);
            \begin{scope}[shift={(0,3)}]
        		\filldraw[fill=purple] (-0.5-0.4,-0.5) -- (-0.5-0.4,0.5) -- (0.5+0.4,0.5) -- (0.5+0.4,-0.5) -- (-0.5-0.4,-0.5);
        		\draw (0,0) node {\scriptsize $A_{[i-1]}$};
        	\end{scope}
            \begin{scope}[shift={(0,1.5)}]
        		\filldraw[fill=amaranth] (-0.5,-0.5) -- (-0.5,0.5) -- (0.5,0.5) -- (0.5,-0.5) -- (-0.5,-0.5);
        		\draw (0,0) node {\scriptsize $O_\alpha$};
        	\end{scope}
            \begin{scope}[shift={(0,0)}]
        		\filldraw[fill=purple] (-0.5-0.4,-0.5) -- (-0.5-0.4,0.5) -- (0.5+0.4,0.5) -- (0.5+0.4,-0.5) -- (-0.5-0.4,-0.5);
        		\draw (0,0) node {\scriptsize $A^*_{[i-1]}$};
        	\end{scope}
            \begin{scope}[shift={(2.7,1.5)}]
        		\filldraw[fill=purple] (-0.5-0.3,-0.5-1.5) -- (-0.5-0.3,0.5+1.5) -- (0.5+0.3,0.5+1.5) -- (0.5+0.3,-0.5-1.5) -- (-0.5-0.3,-0.5-1.5);
        		\draw (0,0) node {\scriptsize $\mathbb{E}_{i,j}$};
        	\end{scope}
            \begin{scope}[shift={(5.2,3)}]
        		\filldraw[fill=purple] (-0.5-0.2,-0.5) -- (-0.5-0.2,0.5) -- (0.5+0.2,0.5) -- (0.5+0.2,-0.5) -- (-0.5-0.2,-0.5);
        		\draw (0,0) node {\scriptsize $A_{[j]}$};
        	\end{scope}
            \begin{scope}[shift={(5.2,1.5)}]
        		\filldraw[fill=amaranth] (-0.5,-0.5) -- (-0.5,0.5) -- (0.5,0.5) -- (0.5,-0.5) -- (-0.5,-0.5);
        		\draw (0,0) node {\scriptsize $O'_\beta$};
        	\end{scope}
            \begin{scope}[shift={(5.2,0)}]
        		\filldraw[fill=purple] (-0.5-0.2,-0.5) -- (-0.5-0.2,0.5) -- (0.5+0.2,0.5) -- (0.5+0.2,-0.5) -- (-0.5-0.2,-0.5);
        		\draw (0,0) node {\scriptsize $A^*_{[j]}$};
        	\end{scope}
            \begin{scope}[shift={(7,1.5)}]
        		\filldraw[fill=yellow] (-0.5-0.1,-0.5) -- (-0.5-0.1,0.5) -- (0.5+0.1,0.5) -- (0.5+0.1,-0.5) -- (-0.5-0.1,-0.5);
        		\draw (0,0) node {\scriptsize $\Lambda_{[j]}$};
        	\end{scope}
        \end{tikzpicture} \right. \nonumber
         \\
         &\hspace{3.5cm} \left. - \ 
        \begin{tikzpicture}[scale=.45, baseline={([yshift=-0.5ex]current bounding box.center)}, thick]
            \draw (-1.5,0) -- (-1.5,3) -- (1.9,3) -- (1.9,0) -- (-1.5,0);
            \draw (0,0) -- (0,3);
            \begin{scope}[shift={(0,3)}]
        		\filldraw[fill=purple] (-0.5-0.4,-0.5) -- (-0.5-0.4,0.5) -- (0.5+0.4,0.5) -- (0.5+0.4,-0.5) -- (-0.5-0.4,-0.5);
        		\draw (0,0) node {\scriptsize $A_{[i-1]}$};
        	\end{scope}
            \begin{scope}[shift={(0,1.5)}]
        		\filldraw[fill=amaranth] (-0.5,-0.5) -- (-0.5,0.5) -- (0.5,0.5) -- (0.5,-0.5) -- (-0.5,-0.5);
        		\draw (0,0) node {\scriptsize $O_\alpha$};
        	\end{scope}
            \begin{scope}[shift={(0,0)}]
        		\filldraw[fill=purple] (-0.5-0.4,-0.5) -- (-0.5-0.4,0.5) -- (0.5+0.4,0.5) -- (0.5+0.4,-0.5) -- (-0.5-0.4,-0.5);
        		\draw (0,0) node {\scriptsize $A^*_{[i-1]}$};
        	\end{scope}
            \begin{scope}[shift={(1.9,1.5)}]
        		\filldraw[fill=yellow] (-0.5-0.4,-0.5) -- (-0.5-0.4,0.5) -- (0.5+0.4,0.5) -- (0.5+0.4,-0.5) -- (-0.5-0.4,-0.5);
        		\draw (0,0) node {\scriptsize $\Lambda_{[i-1]}$};
        	\end{scope}
        \end{tikzpicture}
        \ \cdot \
        \begin{tikzpicture}[scale=.45, baseline={([yshift=-0.5ex]current bounding box.center)}, thick]
            \draw (3.75,0) -- (3.75,3) -- (7,3) -- (7,0) -- (3.75,0);
            \draw (5.2,0) -- (5.2,3);
            \begin{scope}[shift={(5.2,3)}]
        		\filldraw[fill=purple] (-0.5-0.2,-0.5) -- (-0.5-0.2,0.5) -- (0.5+0.2,0.5) -- (0.5+0.2,-0.5) -- (-0.5-0.2,-0.5);
        		\draw (0,0) node {\scriptsize $A_{[j]}$};
        	\end{scope}
            \begin{scope}[shift={(5.2,1.5)}]
        		\filldraw[fill=amaranth] (-0.5,-0.5) -- (-0.5,0.5) -- (0.5,0.5) -- (0.5,-0.5) -- (-0.5,-0.5);
        		\draw (0,0) node {\scriptsize $O'_\beta$};
        	\end{scope}
            \begin{scope}[shift={(5.2,0)}]
        		\filldraw[fill=purple] (-0.5-0.2,-0.5) -- (-0.5-0.2,0.5) -- (0.5+0.2,0.5) -- (0.5+0.2,-0.5) -- (-0.5-0.2,-0.5);
        		\draw (0,0) node {\scriptsize $A^*_{[j]}$};
        	\end{scope}
            \begin{scope}[shift={(7,1.5)}]
        		\filldraw[fill=yellow] (-0.5-0.1,-0.5) -- (-0.5-0.1,0.5) -- (0.5+0.1,0.5) -- (0.5+0.1,-0.5) -- (-0.5-0.1,-0.5);
        		\draw (0,0) node {\scriptsize $\Lambda_{[j]}$};
        	\end{scope}
        \end{tikzpicture} \right| \nonumber \\
         &= \sup_{\substack{\alpha, \beta \\ \|\ket{\alpha}\|_2 = 1 \\ \| \ket{\beta} \|_2 = 1}}
         \frac{1}{\| O_\alpha \|_{\text{op}} \| O'_\beta \|_{\text{op}}} \left| \
         \begin{tikzpicture}[scale=.45, baseline={([yshift=-0.5ex]current bounding box.center)}, thick]
            \draw (0.75,1) -- (0,1) -- (0,-1) -- (0.75,-1);
            \begin{scope}[shift={(0,0)}]
        		\filldraw[fill=amaranth] (-0.5,-0.5) -- (-0.5,0.5) -- (0.5,0.5) -- (0.5,-0.5) -- (-0.5,-0.5);
        		\draw (0,0) node {\scriptsize $\alpha$};
        	\end{scope}
        \end{tikzpicture}
         \left( \
         \begin{tikzpicture}[scale=.45, baseline={([yshift=-0.5ex]current bounding box.center)}, thick]
            \draw (-1.4,1) -- (1.4,1);
            \draw (-1.4,-1) -- (1.4,-1);
            \begin{scope}[shift={(0,0)}]
        		\filldraw[fill=purple] (-0.5-0.2,-0.5-0.7) -- (-0.5-0.2,0.5+0.7) -- (0.5+0.2,0.5+0.7) -- (0.5+0.2,-0.5-0.7) -- (-0.5-0.2,-0.5-0.7);
        		\draw (0,0) node {\scriptsize $\mathbb{E}_{i,j}$};
        	\end{scope}
        \end{tikzpicture}
        \ \right. \right. \nonumber \\
        &\hspace{4.5cm} \left. \left. - \
        \begin{tikzpicture}[scale=.45, baseline={([yshift=-0.5ex]current bounding box.center)}, thick]
            \draw (-1.4,-1) -- (0,-1) -- (0,1) -- (-1.4,1);
            \draw (2.1,-1) -- (1.4,-1) -- (1.4,1) -- (2.1,1);
            \begin{scope}[shift={(0,0)}]
        		\filldraw[fill=yellow] (-0.5-0.4,-0.5) -- (-0.5-0.4,0.5) -- (0.5+0.4,0.5) -- (0.5+0.4,-0.5) -- (-0.5-0.4,-0.5);
        		\draw (0,0) node {\scriptsize $\Lambda_{[i-1]}$};
        	\end{scope}
        \end{tikzpicture}
         \ \right)
         \begin{tikzpicture}[scale=.45, baseline={([yshift=-0.5ex]current bounding box.center)}, thick]
            \draw (-0.75,1) -- (0,1) -- (0,-1) -- (-0.75,-1);
            \begin{scope}[shift={(0,0)}]
        		\filldraw[fill=amaranth] (-0.5,-0.5) -- (-0.5,0.5) -- (0.5,0.5) -- (0.5,-0.5) -- (-0.5,-0.5);
        		\draw (0,0) node {\scriptsize $\beta$};
        	\end{scope}
        \end{tikzpicture}
        \ \right| \nonumber 
         \\
         &= \sup_{\substack{\alpha, \beta \\ \|\ket{\alpha}\|_2 = 1 \\ \| \ket{\beta} \|_2 = 1}}
         \frac{| \mel{\alpha}{(\mathbb{E}_{i,j} - \dyad{\Lambda_{[i-1]}}{\mathds{1}})}{\beta} |}{\| O_\alpha \|_{\text{op}} \| O'_\beta \|_{\text{op}}}
         \label{eq:decay_4} 
     \end{align}
     
     We now define operators $\mathcal{P}: \alpha \mapsto O_\alpha$, $\mathcal{P'}: \beta \mapsto O_\beta'$, with the associated operator norms
     \begin{equation*}
         \| \mathcal{P} \|_{\text{op}\to\text{op}} = \sup_\alpha \left\{ \frac{\|O_{\alpha}\|_{\text{op}}}{\|\alpha\|_{\text{op}}} \right\}, \
         \| \mathcal{P'} \|_{\text{op}\to\text{op}} = \sup_\beta \left\{ \frac{\|O_{\beta}'\|_{\text{op}}}{\|\beta\|_{\text{op}}} \right\}.
     \end{equation*}
     Note that for any $\alpha, \beta$ with $\| \ket{\alpha} \|_2 = \|\ket{\beta}\|_2 = 1$, and using the fact that $\| \alpha \|_{\text{op}} \leq \|\ket{\alpha} \|_2$, we have that
     \begin{equation} \label{eq:decay_5}
         \frac{\|\mathcal{P}\|_{\text{op}\to\text{op}}}{\|O_\alpha\|_{\text{op}}} \geq 1, \quad 
         \frac{\|\mathcal{P}'\|_{\text{op}\to\text{op}}}{\|O'_\beta\|_{\text{op}}} \geq 1 .
     \end{equation}
     We are now going to provide an upper bound on the desired quantity $\Delta_{i,j}$, where
     \begin{align*}
         \Delta_{i,j} &:= \|\mathbb{E}_{i,j} - \dyad{\Lambda_{[i-1]}}{\mathds{1}}\|_{\text{op}} \\
         &= \sup_{\substack{\alpha, \beta \\ \|\alpha\|_2 = 1 \\ \|\beta\|_2 = 1}} |\mel{\alpha}{(\mathbb{E}_{i,j} - \dyad{\Lambda_{[i-1]}}{\mathds{1}})}{\beta}|,
     \end{align*}
     using Eq. \eqref{eq:decay_4} together with Eq. \eqref{eq:decay_5}. That is, letting $M := \| \mathcal{P} \|_{\text{op}\to\text{op}} \| \mathcal{P}' \|_{\text{op}\to\text{op}}$, we have
     \begin{align*}
         \Delta_{i,j}
         &=  
         \sup_{\substack{\alpha, \beta \\ \|\alpha\|_2 = 1 \\ \|\beta\|_2 = 1}} M \cdot \frac{ | \mel{\alpha}{(\mathbb{E}_{i,j} - \dyad{\Lambda_{[i-1]}}{\mathds{1}})}{\beta} |}{\|O_\alpha\|_{\text{op}} \|O'_\beta\|_{\text{op}}} \cdot 
         \\
         &\hspace{3cm} \cdot \frac{\|O_\alpha\|_{\text{op}} \|O'_\beta\|_{\text{op}}}{\| \mathcal{P} \|_{\text{op}\to\text{op}} \| \mathcal{P}' \|_{\text{op}\to\text{op}}} \\
         &\leq  
         \sup_{\substack{\alpha, \beta \\ \|\alpha\|_2 = 1 \\\|\beta\|_2 = 1}} M \cdot \frac{| \mel{\alpha}{(\mathbb{E}_{i,j} - \dyad{\Lambda_{[i-1]}}{\mathds{1}})}{\beta} |}{\|O_\alpha\|_{\text{op}} \|O'_\beta\|_{\text{op}}} \\
         &\leq M K e^{-|i-j|/\xi}
     \end{align*}

     Let us further upper bound the quantity $M$ by studying the value $\| \mathcal{P} \|_{\text{op}\to\text{op}} = \sup_{\|\alpha\|_{op} = 1} \{ \|O_\alpha\|_{op}\}$. Note that for any $\alpha$ with $\|\alpha\|_{op} = 1$, we have
     \begin{align*}
        \| O_\alpha \|_{\text{op}} &= \sup_{\substack{\ket{\psi},\ket{\phi} \\ \|\ket{\psi}\|_2 = \| \ket{\phi} \|_2 = 1}} \left| \mel{\psi}{O_\alpha}{\phi} \right| \\
        &\leq \sup_{\ket{\tilde{\psi}}, \ket{\tilde{\phi}}} |\langle \tilde{\psi} | \Lambda_{[i-1]}\otimes\alpha | \tilde{\phi} \rangle | \\
        &\leq \| |\tilde{\psi}\rangle \|_2 \| |\tilde{\phi} \rangle \|_2 \| \Lambda_{[i-1]} \otimes \alpha \|_{\text{op}} \\
        &= \| |\tilde{\psi}\rangle \|_2 \| |\tilde{\phi}\rangle \|_2 \underbrace{\| \Lambda_{[i-1]} \|_{\text{op}}}_{\leq 1} \underbrace{\| \alpha \|_{\text{op}}}_{=1}
        \\
        &\leq \| |\tilde{\psi}\rangle \|_2 \| |\tilde{\phi}\rangle \|_2 \\
        &\leq 
        \| \ket{\psi} \|_2 \| \ket{\phi} \|_2 \|A_{[i-1]}^{-1}\|_{\text{op}} \|(A_{[i-1]}^*)^{-1}\|_{\text{op}} \\
        &= \|A_{[i-1]}^{-1}\|_{\text{op}}^2
    \end{align*}
     where we have used the fact that $\| \Lambda_{[i-1]} \|_{\text{op}} \leq \Tr[\Lambda_{[i-1]}] = 1$. Similarly, 
     \begin{equation*}
         \| \mathcal{P}' \|_{\text{op}\to\text{op}} \leq \|A_{[j]}^{-1}\|_{\text{op}}^2
     \end{equation*}
     Therefore, 
     \begin{equation*}
         \|\mathbb{E}_{i,j} - \dyad{\Lambda_{[i-1]}}{\mathds{1}}\|_{\text{op}} \leq 
         C e^{-|i-j|/\xi}
     \end{equation*} 
     as desired, where $C := K \max_{i} \|A_{[i]}^{-1}\|_{\text{op}}^4 = O(1)$ by our assumption.
\end{proof}

\subsection{The product structure of ergodic non-TI MPS-up}
\label{app:proof_nonTI_genericMPS}

Now, we proceed to show that ergodicity and the MPS-up property imply that the state factorizes into a product of almost all sites or blocks of a few sites. As already mentioned, we may assume without loss of generality that the tensors are injective. 

\begin{restatable}{proposition}{nonTIpurity}
    \label{prop:purity_non-TI_injective}    
    Consider an ergodic family of injective non-TI MPS $\{\ket{\psi_N}\}$ defined by normalised tensors $A^{[n]}$ as
    \begin{equation*}
        \ket{\psi_N} :=
        \begin{tikzpicture}[scale=.6, baseline={([yshift=-1ex]current bounding box.center)}, thick]
            \begin{scope}[shift={(0,0)}]
        		\draw[shift={(0,0)},dotted] (0.5,0) -- (4,0);
                \MPSTensorrect{(0,0)}{$A^{[1]}$}{purple}{0.7}
                \MPSTensorrect{1.5,0}{$A^{[2]}$}{purple}{0.7}
                \MPSTensorrect{4.5,0}{$A^{[N]}$}{purple}{0.7}
                \draw (-1,0) -- (-1,-0.8*0.7) -- (5.5,-0.8*0.7) -- (5.5,0);
        	\end{scope}
        \end{tikzpicture}
    \end{equation*}
    with the MPS-up$_{\varepsilon, D}$ property for all $N$. If $\varepsilon < \frac{1}{4D}$, then it must necessarily consist of states that are products of almost all sites.
\end{restatable}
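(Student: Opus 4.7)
The strategy is to lift Proposition \ref{prop:purity_normal} to the non-TI setting, replacing the transfer-matrix convergence there with the ergodicity assumption. Fix a spacing $k$ and let $S = S_k^N \subset \{1,\ldots,N\}$ be the subset of every $k$-th particle, of size $n = |S|$. The lower bound on the subsystem purity is obtained exactly as in the TI proof: applying the permutation that brings the sites of $S$ to the beginning of the chain and invoking the MPS-up$_{\varepsilon,D}$ property together with Lemma \ref{lemma:trace_distance} gives
\begin{equation*}
    \text{Tr}[(\rho_S)^2] \geq \frac{1}{D} - 4\varepsilon,
\end{equation*}
which is strictly positive by the hypothesis $\varepsilon < 1/(4D)$. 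This step makes no use of translational invariance and goes through verbatim.

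For the matching upper bound I would use ergodicity to argue that, as $k \to \infty$, the reduced density matrix $\rho_S$ asymptotically factorizes into a tensor product of one-site marginals. In the MPS picture $\rho_S$ is obtained by contracting, between consecutive sites of $S$, a block of $k-1$ transfer matrices $\mathbb{E}^{[j]}(X) = \sum_i A^{[j],i} X (A^{[j],i})^\dagger$. Ergodicity, as used in \cite{movassagh_2021_ergodic, movassagh_ergodic_2022}, precisely means that such compositions converge exponentially fast in $k$ to a rank-one replacement channel, so the ``memory'' between consecutive sites of $S$ decouples. Iterating this along the chain yields
\begin{equation*}
    \text{Tr}[(\rho_S)^2] \;=\; \prod_{i \in S} \eta_i \;+\; O(e^{-\alpha k}),
    \qquad \eta_i := \text{Tr}[(\rho^{(i)})^2],
\end{equation*}
for some $\alpha>0$ independent of $N$.

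Combining both bounds gives $\prod_{i \in S} \eta_i \geq 1/D - 4\varepsilon - o_k(1)$. Since the right-hand side is a positive constant while each $\eta_i \in (0,1]$, and since $|S|$ can grow without bound as $N \to \infty$, only boundedly many $\eta_i$ may be strictly smaller than one; equivalently, $\eta_i = 1$ for almost all sites. Then, mimicking the end of the proof of Proposition \ref{prop:purity_normal} site by site, $\eta_i = 1$ forces the one-body density matrix at site $i$ to be pure, and by the injectivity of $A^{[i]}$ this implies (after absorbing the appropriate gauge) that the tensor has effective bond dimension one at that site, so $\ket{\psi_N}$ factorizes across it.

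The main obstacle is the ergodicity step: unlike the TI setting, where $\mathbb{E}^m \to \dyad{\Lambda}{\mathds{1}}$ follows from a single spectral gap, here I need to combine products of distinct channels and verify that the quantitative convergence of \cite{movassagh_2021_ergodic, movassagh_ergodic_2022} is uniform enough to yield the factorization of $\text{Tr}[(\rho_S)^2]$ with the correct one-site marginals appearing at each site of $S$. Boundary effects at the two ends of the chain and the precise way in which the canonical-form gauges of each $A^{[i]}$ interact with the ergodicity estimate will require care, but should not change the conclusion because the lower bound $1/D - 4\varepsilon$ is strictly positive by hypothesis and dominates the exponentially small error.
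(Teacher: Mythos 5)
Your proposal is correct and follows essentially the same route as the paper's proof: lower-bound the purity of a sparse subset via the MPS-up property and Lemma \ref{lemma:trace_distance}, upper-bound it as a product of one-site purities using the ergodic convergence of the inter-site transfer-matrix blocks to rank-one replacement channels, and conclude that almost all one-site purities must equal one, forcing bond dimension one at those sites. The only refinement in the paper is that it runs the argument over arbitrary increasing subsets $S_{k,n}^N$ with gaps growing in $k$ (not just every $k$-th site), which is what allows it to conclude that all but finitely many tensors have bond dimension one rather than only those lying in a single arithmetic progression.
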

\begin{proof}
    We follow the steps in the proof of Proposition \ref{prop:purity_normal}. For any $k, n \in \mathbb{N}$, let $S_{k,n}^N := \{i_1(k), i_2(k), \dots, i_n(k)\} \subseteq \{1, \dots, N\}^n$ be a strictly increasing set of natural numbers, the separation between them growing arbitrarily large as $k$ increases, i.e. the functions $i_\alpha(k)$ satisfy that $i_\alpha(k) < i_{\alpha+1}(k)$ and $i_{\alpha+1}(k) - i_\alpha(k) \to \infty$ as $k\to \infty$ for all $\alpha$. 
    
    For each $N$ and each set $S_{k,n}^N$, we define $\pi \in \mathcal{S}_{N}$ as the permutation sending all particles labelled in increasing order by $S_{k,n}^N$ to the beginning of the chain. That is, $\pi(i_\alpha(k)) = \alpha$ for $\alpha = 1, \dots, n$, and for the rest of the particles labelled in increasing order as $\{m_1, \dots, m_{N-n}\} = \{1, \dots, N\} \setminus S_{k,n}^N$, we have $\pi(m_i) = n+i$. The permutation used in the proof of Prop. \ref{prop:purity_normal} corresponds to set $S^N_{k,n} = \{k, 2k, \dots, nk\}$.

    As was argued in Proposition \ref{prop:purity_normal}, a lower bound on the purity of subsystem $S^N_{k,n}$ can be obtained through the MPS-up$_{\varepsilon, D}$ property, implying that
    \begin{equation*}
        \Tr[(\rho_{S})^2] \geq \frac{1}{D} - 4\varepsilon.
    \end{equation*}

    On the other hand, we can express this purity as follows
    \begin{widetext}
    \begin{align*}
    \Tr[(\rho_{S})^2] = \frac{1}{Z_N} \Tr[ \prod_{\alpha=1}^n 
    \begin{array}{c}
		\begin{tikzpicture}[scale=.45, baseline={([yshift=-0ex]current bounding box.center)}, thick]
            \ETensor{0,0}{$\mathbb{E}_{i_{\alpha-1}(k)+1, i_\alpha(k)}$}{purple}{2.5}{0.7}
            \ETensor{0,1.4}{$\mathbb{E}_{i_{\alpha-1}(k)+1, i_\alpha(k)}$}{purple}{2.5}{0.7}
            \draw (1.7+1.3,0) -- (1.7+1.8,1.4);
            \draw (1.7+1.3,1.4) -- (1.7+1.8,0);
            \draw (1.7+1.3,0.7) -- (1.7+1.8,0.7);
            \draw (1.7+1.3,2.1) -- (1.7+1.8,2.1);
            \ETensor{2.3+2.8,0}{$\mathbb{E}_{[i_\alpha(k)]}$}{purple}{1.1}{0.7}
            \ETensor{2.3+2.8,1.4}{$\mathbb{E}_{[i_\alpha(k)]}$}{purple}{1.1}{0.7}
            \draw (2.9+3.8,0) -- (2.9+4.3,1.4) -- (2.9+5,1.4);
            \draw (2.9+3.8,1.4) -- (2.9+4.3,0) -- (2.9+5,0);
            \draw (2.9+3.8,0.7) -- (2.9+5,0.7);
            \draw (2.9+3.8,2.1) -- (2.9+5,2.1);
            \draw [decorate, decoration = {calligraphic brace,mirror}] (-2.7,2.3) --  (-2.7,-0.2);
            \draw [decorate, decoration = {calligraphic brace}] (3.2+4.4,2.3) --  (3.2+4.4,-0.2);
        \end{tikzpicture}
    \end{array} ]
    \xrightarrow{\scriptsize k\to\infty} \nonumber 
    \prod_{\alpha=1}^n
    \begin{array}{c}
		\begin{tikzpicture}[scale=.45, baseline={([yshift=0ex]current bounding box.center)}, thick]
            \ETensor{0,0}{$\mathbb{E}_{[i_\alpha(k)]}$}{purple}{1.1}{0.7}
            \ETensor{0,1.4}{$\mathbb{E}_{[i_\alpha(k)]}$}{purple}{1.1}{0.7}
            \draw (-1-0.6,0.7) -- (-1-0.6,1.4);
            \draw (-1-0.6,0) -- (-1.3-0.6,0) -- (-1.3-0.6,2.1) -- (-1-0.6,2.1);
            \draw (1+0.6,0) -- (1.5+0.6,1.4);
            \draw (1+0.6,1.4) -- (1.5+0.6,0);
            \draw (1+0.6,0.7) -- (1.5+0.6,0.7);
            \draw (1+0.6,2.1) -- (1.5+0.6,2.1);
            \RightVector{2.5+1.4,0}{$\sigma_{i_\alpha(k)+1}$}{yellow}{1.3}{0.7}
            \RightVector{2.5+1.4,1.4}{$\sigma_{i_\alpha(k)+1}$}{yellow}{1.3}{0.7}
        \end{tikzpicture}
    \end{array}
     =: \prod_{\alpha=1}^n \eta_{[i_\alpha(k)]},
    \end{align*}
    \end{widetext} 
    where $Z_N := \Tr[\prod_{j=1}^N \mathbb{E}_{[i]}]^2 \xrightarrow{k\to\infty} 1$, and we used assumption (\ref{eq:condition_non-TI_purity_proof}) in the limit $k \to \infty$. Then, we have that
    \begin{align*} 
        \Tr[(\rho_S)^2] &\xrightarrow{k\to\infty} \prod_{\alpha=1}^n \eta_{[i_\alpha(k)]} \geq \frac{1}{D} - 4\varepsilon \label{eq:purity2}, 
    \end{align*}
    where the lower bound is a strictly positive quantity if $\varepsilon < \frac{1}{4D}$. Noting that $0 < \eta_{[i_\alpha(k)]} \leq 1$, and using an argument analogous to the one in the proof of Proposition \ref{prop:purity_normal}, this leads to a contradiction for sufficiently large $n$, unless $D_{[i_\alpha(k)]} = 1$ for all $\alpha$, except for (at most) a finite number of them. Since we can choose $S_{k,n}^N$ arbitrarily, we conclude that the bond dimension of almost all of the tensors $\{A^{[i]}\}$ is necessarily one, except maybe for $A^{[i]}$ with $i \in T$ for some subset $T \subseteq \mathbb{N}$ of finite size, $|T| < \infty$. Therefore, the family necessarily consists of states that are products of almost all sites and the claim follows. 
\end{proof}

\section{Proofs for the MPS to product-type ansatz replacement} \label{sec:app_algorithms}

\begin{lemma} \label{lemma:technical_algorithms}
    Consider a Hamiltonian $\mathcal{H}$ with a gap $\Delta > 0$ on $N$ particles, with ground state energy 0 wlog, and a unique ground state $\ket{\psi_0}$. If $\ket{\psi_1}, \ket{\psi_2}$ are any two states with energies $\epsilon_1, \epsilon_2$ close to 0, then
    \begin{equation*}
        \| \ket{\psi_1} - \ket{\psi_2} \|_2 \leq 2 \sqrt{\frac{\max_i \epsilon_i}{\Delta}}
    \end{equation*}
\end{lemma}
\begin{proof}
    Wlog, we can write
    \begin{align*}
        \ket{\psi_1} &:= \sqrt{\delta_1} \ket{\psi_0} + \sqrt{1 - \delta_1} \ket{\xi_1}, \\
        \ket{\psi_2} &:= \sqrt{\delta_2} \ket{\psi_0} + \sqrt{1 - \delta_2} \ket{\xi_2},
    \end{align*}
    where $\delta_i \in [0,1]$, $\ket{\psi_0}, \ket{\xi_i}$ are normalized, and $\braket{\xi_i}{\psi_0} = 0$. We note that
    \begin{align*}
        \epsilon_i = \expval{\mathcal{H}}{\psi_i} &= (1-\delta_i) \expval{\mathcal{H}}{\xi_i} \geq (1-\delta_i) \Delta \\
         \implies & 1 - \delta_i \leq \frac{\epsilon_i}{\Delta}
    \end{align*}
    From this, we obtain
    \begin{align*}
        \braket{\psi_1}{\psi_2} &= \sqrt{\delta_1 \delta_2} + \sqrt{(1-\delta_1)(1-\delta_2)} \braket{\xi_1}{\xi_2} \\
        \text{Re}(\braket{\psi_1}{\psi_2}) &\geq 
        \sqrt{\delta_1 \delta_2} - \sqrt{(1-\delta_1)(1-\delta_2)} \\
        &\geq 2 \min_i \delta_i - 1 \geq 1 - \frac{2}{\Delta} \max_i \epsilon_i
        \\
        \| \ket{\psi_1} - \ket{\psi_2} \|_2^2 &= 
        2 \left(1 - \text{Re}(\braket{\psi_1}{\psi_2})\right) \leq \\
        &\leq 4 \frac{\max_i \epsilon_i}{\Delta}.
    \end{align*}
\end{proof}

\begin{mycly}{5} 
    Under the above assumptions, the true ground state $\ket{\psi_0}$ of $\mathcal{H}$ is at most $2\sqrt{\varepsilon/\Delta}$ away from one of the simple ansätze listed in the box of Table \ref{table:results} that corresponds to the structure assumed on $\ket{\psi_1}$, provided $\varepsilon/\Delta$ is small enough (i.e. $\ket{\psi_0}$ is a product state, a small superposition thereof, or a product of short-range entangled blocks).
\end{mycly}
\begin{proof}
    Assume that $\ket{\psi_1}$ and $\ket{\psi_2}$ have energies $\varepsilon_1, \varepsilon_2$, respectively, where $\varepsilon_i \leq \varepsilon$. Using Lemma \ref{lemma:technical_algorithms}, the assumption can be expressed as
    \begin{equation} \label{eq:corollary_aux}
        \| U_\pi \ket{\psi_1} - \ket{\psi_2} \| \leq 2 \sqrt{\frac{\max_i \varepsilon_i}{\Delta}} =: \tilde{\varepsilon} 
    \end{equation}    
    where $\ket{\psi_1}$ would be the MPS state on the particles arranged as $\{1, \dots, N\}$, and $\ket{\psi_2}$ the other MPS state found when arranging the particles as $\{\pi(1), \dots, \pi(N)\}$. Then, $\ket{\psi_2}$ has the MPS-up property with respect to a particular permutation $\pi$ and we can use Table \ref{table:results} if $\tilde{\varepsilon}$ is sufficiently small. Note that it is enough to require that Eq. (\ref{eq:corollary_aux}) holds for a single specific permutation. Indeed, for the exact MPS-up case, any $\pi$ where the Schmidt rank computed in Eq. (\ref{eq:rank_counting_normal_proof_aux}) is larger than $D$ suffices. For the approximate MPS-up case, choosing $\pi$ to be the permutation in Prop. \ref{prop:purity_normal} or Prop. \ref{prop:purity_non-TI_injective} is also enough.

    Applying Lemma \ref{lemma:technical_algorithms} again, knowing that $\ket{\psi_1}, \ket{\psi_2}$ are product states or superpositions of a few of them, we have
    \begin{equation*}
       \| \ket{\psi_i} - \ket{\psi_0} \| \leq 2 \sqrt{\frac{\epsilon_i}{\Delta}}
    \end{equation*}
    for $i = 1, 2$. Therefore, we can conclude that the ground state $\ket{\psi_0}$ is $2\sqrt{\min_i \epsilon_i/\Delta}$-close to the claimed ansätze.
 \end{proof}

\section{Lower bound for the efficiency of TI MPS representations of MPS-up with increasing tensor rank}
\label{sec:app_DTI_bound}

Let $\{| \psi_N(A_{(N)}) \rangle\}_N$ be a family of TI MPS defined as
\begin{equation*}
        |\psi_N(A_{(N)})\rangle :=
        \begin{tikzpicture}[scale=.6, baseline={([yshift=-1ex]current bounding box.center)}, thick]
            \begin{scope}[shift={(0,0)}]
        		\draw[shift={(0,0)},dotted] (0.5,0) -- (4,0);
                \MPSTensorrect{(0,0)}{$A_{(N)}$}{purple}{0.8}
                \MPSTensorrect{1.5,0}{$A_{(N)}$}{purple}{0.8}
                \MPSTensorrect{4.5,0}{$A_{(N)}$}{purple}{0.8}
                \draw (-1,0) -- (-1,-0.8*0.8) -- (5.5,-0.8*0.8) -- (5.5,0);
        	\end{scope}
        \end{tikzpicture} 
        \in (\mathbb{C}^{\otimes d})^{\otimes N}.
    \end{equation*}
For each value of $N$, the site matrices $A_{(N)}$ can be expressed in canonical form as shown in Eq. (\ref{eq:CF_def}) in terms of a basis of normal tensors $\{A_{j,(N)}\}$ with $b_{(N)}$ elements, $A^i_{j,(N)} \in \mathcal{M}_{D_{j,(N)}}(\mathbb{C})$. 

Even though every TI state on a finite chain of $N$ particles admits a TI MPS representation, the bond dimension can generally increase with the system size \cite{Perez-Garcia2007}. Actually, the problem of determining what is the minimal TI MPS for a given state is an open question \cite{barthel2022closedness, klimov2023translation}. 

Here we show that the existing lower bound for the bond dimension of any TI MPS representation of the W-state, of $\Omega(N^{1/(3+\delta)})$ for each $\delta > 0$ \cite{Perez-Garcia2007, Michalek2018}, is also valid for families of exact MPS-up with increasing tensor rank, by closely following the proof of Corollary A.1 in \cite{Perez-Garcia2007}. We restate Corollary \ref{cor:lowerbound_DTI} below for the reader's convenience.

\lowerboundDTI*
\begin{proof}
    Assume that $N/2 > L_{BI}^{(N)} := 3(b_{(N)} - 1)(L_0^{(N)} + 1)$, where $L_0^{(N)}$ is the maximum injectivity length over all the blocks in the BNT, meaning that $L_0^{(N)} \leq \max_j 2(D_{j,(N)})^2 (6 + \log_2 D_{j,(N)})$. Suppose as well that we have already blocked the physical sites to remove periodicities, so $p = 1$. Given these conditions, we can partition our state into two parts containing particles $\{1, \dots, R\}$ and $\{R+1, \dots, N\}$ with $R > L^{(N)}_{BI}$, where $L^{(N)}_{BI}$ is the block-injectivity length of $A_{(N)}$ defined in section \ref{sec:TNs-background}. 
    
    Block-injectivity implies that the sets of vectors $ \{ |\Psi^{(R),N}_{j,\alpha,\beta}\rangle\}_{j,\alpha, \beta}$ and $ \{ |\Psi^{(N-R),N}_{j,\alpha,\beta}\rangle \}_{j,\alpha, \beta}$ with $j \in \{1, \dots, b_{(N)}\}$, $\alpha, \beta \in \{1, \dots, D_{j,(N)}\}$, and
    \begin{equation*}
       |\Psi^{(m),N}_{j, \alpha, \beta}\rangle := \sum_{i_{1}, \dots, i_{m}=1}^d \mel{\alpha}{A_{j,(N)}^{i_1} \dots A_{j,(N)}^{i_m}}{\beta} \ket{i_1 \dots i_R}
    \end{equation*}
    are linearly independent, and thus have dimension $\sum_{j=1}^{b_{(N)}} D_{j,(N)}^2$. This quantity is therefore the rank of the reduced density matrix corresponding to subsystem $\{1, \dots, R\}$, which is equal to the Schmidt rank across that bipartition. According to the MPS-up$_{0,D}$ property, the Schmidt rank should be upper bounded by $D$, so we have
    \begin{equation*}
        \sum_{j=1}^{b_{(N)}} D_{j,(N)}^2 = O(1) \implies b_{(N)}, D_{j,(N)} = O(1), \ \forall j, N.
    \end{equation*}
    If this was the case, then $L_{BI} = O(1)$, and we could apply Theorem \ref{thm:exact_MPS-up_thm} for any sufficiently large $N$ such that $N > L_{BI} (2\log_2 D + 1)$. This would mean that $\ket{\psi_N(A^{(N)})}$ can be written as a superposition of $b_N = O(1)$ product states, which imposes a constant upper bound on the tensor rank. However, this would contradict the assumption that the tensor rank increases with $N$. Therefore, we necessarily have that $N/2 \leq 3(b_{(N)} - 1)(L_0 + 1) = O(D_{(N)}^3 \log D_{(N)})$, which implies that for each $\delta > 0$, $D_{(N)} = \Omega(N^{1/(3+\delta)})$.    
\end{proof}

\end{document}